\documentclass[sigconf]{acmart}

\usepackage{natbib}

\usepackage{soul}
\usepackage{url}
\usepackage[utf8]{inputenc}
\usepackage[skip=2pt]{caption}
\usepackage{graphicx}
\usepackage{amsmath}
\usepackage{amsthm}
\usepackage{booktabs}
\usepackage{algorithmic}
\usepackage[switch]{lineno}

\usepackage{multirow}
\usepackage{array}      
\usepackage{makecell}   

\usepackage{subcaption}
\usepackage{makecell} 

\usepackage{tabularx}  
\usepackage{makecell}  
\usepackage{float}

\usepackage{colortbl}
\usepackage{pgf}

\usepackage[linesnumbered,ruled,vlined]{algorithm2e}
\usepackage{placeins} 

\newcommand{\heatmap}[1]{%
    \pgfmathsetmacro{\val}{#1}%
    \pgfmathsetmacro{\r}{ifthenelse(\val<50, \val/50, 1)}%
    \pgfmathsetmacro{\g}{ifthenelse(\val<50, 1, 2-\val/50)}%
    \pgfmathsetmacro{\b}{0}%
    \edef\temp{\noexpand\cellcolor[rgb]{\r,\g,\b}}%
    \temp #1%
}

\usepackage{mathtools}

\newtheorem{theorem}{Theorem}
\newtheorem{lemma}[theorem]{Lemma}

\newcommand{\cP}{\mathcal{P}}
\newcommand{\cD}{\mathcal{D}}

\newcommand{\kbar}{\bar{k}}

\newcommand{\KHC}{\texttt{RkH}\,}
\newcommand{\MHC}{\texttt{M2hC}\,}
\newcommand{\MRC}{\texttt{MRC}\,}

\newcommand{\threepointfive}{\texttt{GPT-3.5-turbo\,}}
\newcommand{\four}{\texttt{GPT-4o-mini\,}}
\newcommand{\five}{\texttt{GPT-5-mini\,}}

\newcommand{\podcast}{\texttt{podcast\,}}
\newcommand{\news}{\texttt{news\,}}
\newcommand{\semi}{\texttt{semiconductor\,}}
\newcommand{\ms}{\texttt{microsoft\,}}

\usepackage{soul}

\AtBeginDocument{%
  }

\copyrightyear{2026}
\acmYear{2026}
\setcopyright{cc}
\setcctype{by}
\acmConference[KDD '26]{Proceedings of the 32nd ACM SIGKDD Conference on Knowledge Discovery and Data Mining V.2}{August 09--13, 2026}{Jeju Island, Republic of Korea}
\acmBooktitle{Proceedings of the 32nd ACM SIGKDD Conference on Knowledge Discovery and Data Mining V.2 (KDD '26), August 09--13, 2026, Jeju Island, Republic of Korea}
\acmDOI{10.1145/3770855.3818007}
\acmISBN{979-8-4007-2259-2/2026/08}

\begin{document}

\title{Core-based Hierarchies for Efficient GraphRAG}

\author{Jakir Hossain}
\affiliation{%
  \institution{University at Buffalo}
  \city{Buffalo, NY}
  \country{USA}}
\email{mh267@buffalo.edu}

\author{Ahmet Erdem Sar{\i}y\"{u}ce}\authornote{Amazon Web Services. This publication describes work performed at University at Buffalo, and is not associated with Amazon Web Services.}
\affiliation{%
  \institution{University at Buffalo}
  \city{Buffalo, NY}
  \country{USA}}
\email{erdem@buffalo.edu}

\begin{abstract}
Retrieval-Augmented Generation (RAG) enhances large language models by incorporating external knowledge. However, existing vector-based methods often fail on global sensemaking tasks that require reasoning across many documents. 
GraphRAG addresses this by organizing documents into a knowledge graph with hierarchical communities that can be recursively summarized.
Current GraphRAG approaches rely on Leiden clustering for community detection, but we prove that on sparse knowledge graphs, where average degree is constant and most nodes have low degree, modularity optimization admits exponentially many near-optimal partitions, making Leiden-based communities inherently non-reproducible. To address this, we propose replacing Leiden with $k$-core decomposition, which yields a deterministic, density-aware hierarchy in linear time. 
We introduce a set of lightweight heuristics that leverage the $k$-core hierarchy to construct size-bounded, connectivity-preserving communities for retrieval and summarization, along with a token-budget–aware sampling strategy that reduces LLM costs. 
We evaluate our methods on real-world datasets including financial earnings transcripts, news articles, and podcasts, using three LLMs for answer generation and five independent LLM judges for head-to-head evaluation. Across datasets and models, our approach consistently improves answer comprehensiveness and diversity while reducing token usage, demonstrating that $k$-core-based GraphRAG is an effective and efficient framework for global sensemaking.
\end{abstract}

\begin{CCSXML}
<ccs2012>
   <concept>
       <concept_id>10002951.10003317.10003338</concept_id>
       <concept_desc>Information systems~Retrieval models and ranking</concept_desc>
       <concept_significance>500</concept_significance>
       </concept>
 </ccs2012>
\end{CCSXML}

\ccsdesc[500]{Information systems~Retrieval models and ranking}

\keywords{GraphRAG, k-core, Leiden, RAG, information retrieval}

\maketitle

\newcommand\kddavailabilityurl{https://doi.org/10.5281/zenodo.20500254}
\ifdefempty{\kddavailabilityurl}{}{
\begingroup\small\noindent\raggedright\textbf{Resource Availability:}\\
The source code of this paper has been made publicly available at \url{\kddavailabilityurl}.
\endgroup
}

\section{Introduction}

Retrieval-Augmented Generation (RAG) enhances large language models (LLMs) with external knowledge, enabling more grounded and accurate responses to complex queries \cite{lewis20,baumel18,yao17,laskar20,gao23}. It is particularly useful when the corpus is too large to fit within an LLM’s context window \cite{kuratov24, liu24}. In a typical RAG setup, a small set of relevant records is retrieved and fed, along with the query, to the LLM for response generation. RAG has been widely adopted in domains such as healthcare, law, finance, and education, improving faithfulness, reducing hallucinations, mitigating privacy risks, and enhancing robustness \cite{xu24,wiratunga24,zhang23,miladi24,zhao23}.

Most RAG systems rely on lexical/semantic retrieval over text and are effective only for queries answerable from a few documents. These approaches struggle with multi-hop reasoning or queries requiring synthesis across semantically diverse documents or global corpus understanding \cite{liu24,kuratov24}. For instance, answering ``How have patient outcomes in cancer treatment evolved in response to multi-center clinical trials over the last 15 years?'' demands aggregating evidence across many sources, which conventional RAG methods cannot easily support.
Similarly, a financial analyst asking ``What common strategies have semiconductor companies adopted in response to supply chain disruptions over the past decade?'' must synthesize themes across hundreds of earnings call transcripts, a task that no single retrieved passage can answer.
We refer to such queries as \emph{global sensemaking} tasks: they require the system to identify recurring themes, reconcile conflicting perspectives, and synthesize evidence distributed across the entire corpus, capabilities that go beyond standard RAG.

Graph-based Retrieval-Augmented Generation (GraphRAG) has recently emerged as a promising approach for global sensemaking tasks by representing the corpus as a knowledge graph and organizing it into a hierarchy of communities that can be recursively summarized~\cite{han2025}. 
The GraphRAG framework introduced by \citeauthor{edge24} relies on modularity-based community detection, such as Leiden \cite{traag19}. However, Leiden often produces hierarchies that are shallow, overly fragmented, and/or dominated by a few large communities.
Building on an observation by Good, de Montjoye, and Clauset~\cite{good2010performance}, we prove that when the average degree is constant and most nodes have low degree---the regime of typical knowledge graphs---the number of near-optimal modularity partitions is exponential in the graph size (Theorem~\ref{thm:main}). This implies that Leiden-based communities are inherently non-reproducible on such graphs, compounding known issues with resolution limits and sensitivity to initialization.
In practice, this instability causes communities to unpredictably merge or fragment semantically meaningful structures.

Motivated by these limitations, we explore \emph{$k$-core–based alternatives} for hierarchical graph construction in GraphRAG. 
The $k$-core decomposition organizes a network into nested layers of increasing minimum degree~\cite{seidman83}. Each node is assigned a core number: the largest $k$ for which it belongs to a subgraph where every node has at least $k$ neighbors. 
This creates a deterministic, density-aware hierarchy that can be computed in a single $O(|E|)$ pass.
Unlike modularity-driven methods, the $k$-core hierarchy naturally captures progressively denser and more cohesive substructures, making it well suited for global sensemaking tasks that require integrating evidence across multiple interconnected regions of the corpus.
In knowledge graphs specifically, higher $k$-cores correspond to entities connected through multiple distinct relational paths, providing a natural proxy for topical centrality that modularity's comparison to a degree-preserving null model cannot capture.

We operationalize this hierarchy through a set of lightweight, interpretable community construction heuristics that operate over different levels of the $k$-core decomposition. 
Our approach constructs balanced hierarchical communities while explicitly controlling cluster sizes to respect LLM context constraints. These heuristics enable comprehensive and diverse summaries without excessive token usage.
Additionally, we introduce a token-budget–aware sampling strategy that further reduces LLM costs while preserving retrieval quality.
We make the following key contributions:

\begin{itemize}
\item We introduce $k$-core decomposition as a drop-in replacement for Leiden in GraphRAG, yielding deterministic, density-aware hierarchies in linear time.
\item We prove that modularity optimization on sparse graphs admits exponentially many near-optimal partitions (Theorem~\ref{thm:main}), formally explaining why Leiden-based community detection is unreliable on knowledge graphs.
\item We propose multiple hierarchical heuristic strategies that exploit different regions of the induced $k$-core hierarchy to balance coverage, granularity, and efficiency.
\item We conduct extensive evaluations on three real-world datasets, using three LLMs for answer generation and five independent LLM judges for head-to-head evaluation.
\end{itemize}

\section{Related Work and Background}\label{sec:back}

In this section, we review the literature on Graph-based RAG and remind key concepts of GraphRAG workflows and $k$-core decomposition.

\noindent {\bf Graph-based RAG.} Traditional vector-based RAG retrieves relevant passages to augment LLM responses, but is largely limited to single-hop, fact-based queries where the answer resides within one or a few contiguous passages. When reasoning must span multiple documents or synthesize information across a corpus, these approaches fall short \cite{han2025, peng25}.

To address this, several methods incorporate graph structures into the retrieval process. One line of work constructs query-focused subgraphs from knowledge graphs and performs reasoning over them. Systems such as KG-GPT \cite{kim23}, G-Retriever \cite{he24}, GRAG \cite{hu25}, and HLG~\cite{Ghassel25} follow this paradigm, extracting relevant subgraphs and using LLMs to reason over the retrieved structure, thereby improving multi-hop answer quality. A complementary line of work enhances the retrieval step itself with graph signals, using graph topology to guide which documents or passages are selected \cite{sun23,wang24,peng25}.
However, both subgraph-reasoning and graph-enhanced retrieval methods are primarily designed for queries that require a bounded number of reasoning hops. They do not address \emph{global sensemaking}, tasks where the answer requires understanding themes, patterns, or relationships that emerge only when considering an entire corpus.

GraphRAG \cite{edge24} addresses this gap by applying Leiden-based hierarchical community detection on a knowledge graph and using the resulting communities for global sensemaking (see Section~\ref{sec:graphrag_overview} for details).
A key limitation, however, is that modularity-based clustering methods like Leiden can produce highly uneven hierarchies, where community sizes vary widely across levels. This imbalance can lead to inconsistent summary quality and makes it difficult to control the granularity of retrieved context. Our work addresses this limitation by replacing community detection with $k$-core decomposition, which yields a naturally nested and more balanced hierarchy of dense subgraphs.

\subsection{Community-based GraphRAG Overview} \label{sec:graphrag_overview}

\citeauthor{edge24} \cite{edge24} introduced a graph-based approach for global sensemaking question answering by a community-driven GraphRAG workflow. Here, we briefly describe the two main stages of the GraphRAG workflow: indexing and query-time answer generation, followed by their evaluation conditions.

\noindent {\bf Indexing.} Large text corpora are first split into manageable chunks, from which entities, relationships, and claims are extracted using LLMs. These elements are aggregated into a knowledge graph, where nodes represent entities or claims and edges represent relationships, weighted by frequency. The graph is hierarchically partitioned into communities using Leiden community detection \cite{traag19}, recursively identifying sub-communities until reaching leaf-level communities that cannot be further split.
Leaf-level communities are summarized first by prioritizing elements according to edge prominence, adding source and target node descriptions, edges, and related claims into the LLM context until the token limit is reached. For higher-level communities, if all element summaries fit within the context window, they are summarized directly; otherwise, sub-community summaries iteratively replace longer element-level descriptions until they fit. This process, from knowledge graph construction to hierarchical community summaries, is referred to as indexing.

\noindent {\bf Query-Time Answer Generation.}
Given a user query, indexed community summaries are used to generate a final answer through a multi-stage Map-Reduce process. First, summaries are randomly shuffled and divided into chunks of a pre-specified token size to ensure relevant information is distributed across multiple contexts rather than concentrated in a single window. Next, intermediate answers are generated for each chunk in parallel, with the LLM also assigning a helpfulness score between 0 and 100. These intermediate answers are then sorted by helpfulness and iteratively added into a new context window until the token limit is reached. The LLM uses this aggregated context to produce the final global answer. By leveraging the hierarchical structure of the communities, this approach allows answers to be generated from different levels and enables robust aggregation of information from all relevant sub-communities.

\noindent {\bf Evaluation for GraphRAG.} Evaluating RAG systems for multi-document reasoning presents distinct challenges beyond standard QA benchmarks. Recent work assesses output quality along dimensions such as comprehensiveness, diversity, empowerment, and directness \cite{edge24}. 
\citeauthor{edge24} tested six conditions, including four hierarchical community levels (C0--C3), source-text summarization, and a vector-based RAG approach. The hierarchical configurations differ in granularity. \textbf{C0} uses only top-level communities, yielding the smallest set of summaries, \textbf{C1} uses communities one level below the root for a slightly finer decomposition, and so on. Results on two datasets show that all community-level approaches (C0--C3) outperform both source-text summarization and vector RAG for global sensemaking. Among these, \textbf{C2} and \textbf{C3} consistently achieve the best performance, hence we consider those two for our evaluation.

\subsection{Hierarchical $k$-core Decomposition}

In this work, we consider the knowledge graph as an unweighted graph $G = (V, E)$, where $V$ denotes the set of nodes and $E$ denotes the set of edges.
The degree of a node $u$ within a subgraph $S \subseteq G$ is denoted $deg(u, S)$.
A $k$-core is the maximal connected subgraph $S \subseteq G$ in which every vertex has at least $k$ neighbors, i.e., $deg(u, S) \ge k~\forall u \in S$. Each node is assigned a \textit{core number}, the largest $k$ for which it belongs to a $k$-core. The $k$-shell is the set of nodes with core number $k$~\cite{carmi07}. The $k$-cores for all $k$ can be computed by recursively removing nodes with degree less than $k$ and their edges, which runs in $O(|E|)$ time~\cite{batagelj11}. 
This process of identifying all $k$-cores and assigning core numbers is referred to as core decomposition.
Since nodes can belong to multiple $k$-cores with different $k$ values, the decomposition naturally forms a hierarchy of nested dense subgraphs.  
This hierarchy can be represented as a tree, with nodes denoting subgraphs and edges encoding containment relationships.
The root represents the entire graph (1-core), while child nodes correspond to subgraphs with increasing core numbers.
Internal nodes may contain denser subgraphs with higher core values, forming a multi-level structure that captures the nested organization of densely connected regions within the graph.

\section{Why Modularity Optimization is Unreliable on Sparse Knowledge Graphs?}

Modularity-based community detection methods such as Leiden~\cite{traag19} compare observed within-community edge density to the expectation under a degree-preserving null model (the configuration model), an approach that works well in dense networks but degrades in sparse ones. Good, de Montjoye, and Clauset~\cite{good2010performance} provided a systematic characterization of this degradation. They showed that the modularity function ${Q}$ typically admits an exponentially large number of distinct high-scoring partitions while lacking a clear global maximum: a phenomenon they termed the \emph{degeneracy of modularity}. Using real-world metabolic networks, they demonstrated that these near-optimal partitions can fundamentally disagree on key properties such as the composition of the largest modules and the distribution of module sizes, meaning that any optimizer (including Louvain and Leiden) is effectively selecting one solution from a vast equivalence class, guided more by random seeds and tie-breaking rules than by genuine structure. They further showed that $Q_{max}$
 depends strongly on both network size and the number of modules, and that degeneracy is most severe in sparse and hierarchically structured networks.

{\bf Modularity Degeneracy.} We give a key structural observation: the low-degree nodes (degree at most a fixed constant $d$) have vanishing modularity sensitivity in sparse graphs, because their few edges contribute negligibly to the objective. Denote by $n_{\mathrm{\le d}} = |\{i \in V : k_i \leq d\}|$ the number of such nodes. For small $d$, these nodes can be reassigned across communities with negligible impact on modularity, creating an exponentially large degenerate set.

Recall that the modularity of a partition $\sigma$ of a graph $G = (V,E)$ with $n$ nodes and $m$ edges is $Q(\sigma) = \frac{1}{2m}\sum_{i,j}[A_{ij} - k_ik_j/2m]\,\delta(\sigma_i,\sigma_j)$, where $k_i$ is the degree of node $i$. Let $Q^* = \max_\sigma Q(\sigma)$ denote the optimal modularity. 
For $\varepsilon > 0$, we define the \emph{$\varepsilon$-degeneracy} of $G$ as the number of structurally distinct partitions achieving near-optimal modularity: $\cD(\varepsilon) = |\{\sigma \in \cP(G) : Q^* - Q(\sigma) < \varepsilon\}|$.

\begin{theorem}[Modularity Degeneracy in Sparse Graphs]\label{thm:main}
Let $G$ be a graph with $n$ nodes, $m$ edges, average degree $\kbar = 2m/n = O(1)$, and $n_{\mathrm{\le d}} = \Theta(n)$. Then for any $\varepsilon > d(2+\kbar)/(2m)$,
\[
\cD(\varepsilon) \;\geq\; 2^{\,n_{\mathrm{\le d}}/(d+1)}.
\]
In particular, the number of near-optimal partitions is exponential in $n$, and the tolerance threshold $\varepsilon$ required to trigger this blowup is $O(1/n)$.
\end{theorem}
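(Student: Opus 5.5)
We start from an optimal partition $\sigma^*$ with $Q(\sigma^*)=Q^*$ and build exponentially many perturbations, each losing less than $\varepsilon$ in modularity. The key quantity is the single-node move cost. Moving a node $i$ of degree $k_i$ from community $A$ to community $B$ changes the modularity by
\[
\Delta Q = \frac{1}{m}\Bigl[(e_{i,B}-e_{i,A}) - \frac{k_i(K_B - K_A + k_i)}{2m}\Bigr],
\]
where $e_{i,C}$ is the number of edges from $i$ into $C$ and $K_C$ is the total degree of $C$.

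The plan is to choose the target $B$ as a fresh singleton community, which makes the bound independent of the global community structure. Removing $i$ from $A$ into its own singleton changes $Q$ by
\[
\frac{1}{m}\Bigl[-e_{i,A} + \frac{k_i(K_A-k_i)}{2m}\Bigr],
\]
so the loss is at most $e_{i,A}/m\le k_i/m\le d/m$. To reach the stated threshold $d(2+\kbar)/(2m)$, I would be slightly more careful and write the loss as $\frac{1}{2m}\bigl[2e_{i,A} - k_i(K_A-k_i)/m\bigr]\le \frac{2d}{2m}$. The extra $d\kbar/(2m)$ slack in the hypothesis absorbs any null-model term if one instead moves $i$ into a neighboring community. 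Either way, a single low-degree move costs at most $d(2+\kbar)/(2m)<\varepsilon$.

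To get exponentially many partitions, I would select an independent family of low-degree nodes. Greedily pick nodes of degree $\le d$; each chosen node excludes itself and its at most $d$ neighbours. This yields a set $S$ with $|S|\ge n_{\le d}/(d+1)$ such that no two nodes of $S$ are adjacent. For each subset $T\subseteq S$, define $\sigma_T$ by isolating every node of $T$ as a singleton. Distinct $T$ give distinct partitions, provided nodes of $S$ are not already singletons in $\sigma^*$. That case should be handled by noting that singletons can only be merged, or more simply by letting $S$ range over non-singleton low-degree nodes, which costs at most a constant factor. This gives $2^{|S|}\ge 2^{n_{\le d}/(d+1)}$ candidate partitions.

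The main obstacle is that the losses from isolating many nodes simultaneously add up. With $|T|=\Theta(n)$, the naive total loss is $\Theta(1)$, not below $\varepsilon=O(1/n)$. Independence of $S$ removes the edge-interaction terms $A_{ij}$ between moved nodes, so the $e_{i,A}$ parts are additive, but additivity alone does not make the sum small. My first attempt would be a telescoping argument bounding the loss of $\sigma_T$ by the loss of a single move, exploiting that $\sigma^*$ is optimal so each move has loss $\ge 0$. This is where the argument is genuinely delicate, and I suspect it may need either a weaker count or an $\varepsilon$ scaling with $|T|$. As a fallback, I would present the proof as the per-node bound combined with the independence and product structure, which is what yields the $2^{n_{\le d}/(d+1)}$ count. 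The final remark then follows from $\kbar=O(1)$, which makes $m=\Theta(n)$ and hence the threshold $d(2+\kbar)/(2m)=O(1/n)$, while $n_{\le d}=\Theta(n)$ makes the bound exponential in $n$.
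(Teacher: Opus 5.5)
Your skeleton is the paper's: an $O(1/m)$ bound on a single low-degree move (Lemma~\ref{lem:one}), a greedy independent set $S$ of degree-$\le d$ nodes with $|S|\ge n_{\le d}/(d+1)$, and the $2^{|S|}$ perturbations $\sigma_T$, $T\subseteq S$, of an optimal $\sigma^*$. The gap is the step you flag yourself, and your fallback does not close it. A per-node bound plus ``product structure'' says nothing about $Q^*-Q(\sigma_T)$ once $|T|=\Theta(n)$.

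What your singleton construction actually gives is an additive bound. Because $T$ is independent, the edge loss is exactly $\sum_{i\in T}e_{i,\sigma^*(i)}/m$. A community $A$ that loses nodes of total degree $s$ has its penalty change by $[2K_A s-s^2-\sum_{i\in T\cap A}k_i^2]/(4m^2)\ge 0$, which is a gain. Hence $Q^*-Q(\sigma_T)\le |T|d/m$, and this is $\Theta(1)$.

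Telescoping cannot do better. The same computation shows $Q^*-Q(\sigma_T)\ge\sum_{i\in T}\bigl(Q^*-Q(\sigma_{\{i\}})\bigr)$, and every term is nonnegative by optimality, so the losses accumulate rather than cancel.

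The paper's proof has the same hole. It sums the individual sensitivities and the Lemma~\ref{lem:two} cross-terms, obtaining the $n$-independent constants $C_1=d(2+\bar{k})/((d+1)\bar{k})$ and $C_2=d^2/((d+1)^2\bar{k}^2)$. It then concludes $\mathcal{D}(\varepsilon')\ge 2^{|S|}$ only for $\varepsilon'=C_1+C_2$. That is your ``$\varepsilon$ scaling with $|T|$'' fallback, not the stated threshold.

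No argument can close this gap, because the statement is false at tolerances just above $d(2+\bar{k})/(2m)=d/m+d/n$. Take the star $K_{1,n-1}$, so $m=n-1$, $\bar{k}<2$, $d=1$ and $n_{\le 1}=n-1$.
\begin{itemize}
\item For a partition in which $r$ leaves lie outside the center's community, in groups of sizes $b_j$, one has $Q(\sigma)=-(r^2+\sum_j b_j^2)/(4m^2)$. Hence $Q^*=0$.
\item Take $\varepsilon=3/n$, which exceeds $1/m+1/n$ when $n>2$. Near-optimality then forces $r^2<12m^2/n<12n$.
\item This leaves at most $\sum_{r<2\sqrt{3n}}\binom{n-1}{r}B_r=2^{O(\sqrt{n}\log n)}$ near-optimal partitions, where $B_r$ is the Bell number. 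For large $n$ this is far below $2^{(n-1)/2}$.
\end{itemize}

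What your construction does prove, cleanly and without any cross-term lemma, is $\mathcal{D}(\varepsilon)\ge 2^{|S|}$ whenever $\varepsilon>|S|d/m$, modulo your singleton caveat. With $|S|=\lceil n_{\le d}/(d+1)\rceil$ this threshold is a constant. The ``$O(1/n)$ threshold'' part of the theorem should be dropped, not proved.
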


Proof sketch, the full proof, and experimental validation are given in Appendix~\ref{app:proof}.

Knowledge graphs (KG) are particularly well-suited to $k$-core decomposition over modularity-based methods for two structural reasons. First, their heavy-tailed but low-mean degree distributions mean that $n_{\leq d}=\Theta(n)$, placing them squarely in the regime where Theorem~\ref{thm:main} bites hardest. Second, KG edges are semantically meaningful rather than stochastic, so the $k$-core criterion, requiring each node to have at least
$k$ neighbors, directly captures genuine relational richness in a way that comparison to a degree-preserving null model does not.

\section{A Robust Alternative: $k$-core Decomposition}\label{sec:proposed_heuristics}

Theorem~\ref{thm:main} exposes a fundamental instability: for a graph with low average degree, any modularity optimizer, Leiden included, selects one partition from an exponentially large near-optimal plateau. Different seeds, tie-breaking rules, or minor edge perturbations yield structurally different communities.

The $k$-core decomposition sidesteps this problem entirely. It is \emph{unique}: for every graph $G$ and integer $k$, the $k$-core $H_k$ is the unique maximal subgraph with minimum degree $\geq k$, computed by the deterministic peeling process; no optimization landscape, no stochasticity. It is \emph{robust}: adding or removing a few edges changes the shells by at most a proportional number of nodes, whereas Theorem~\ref{thm:main} implies that even a single edge can shift the optimal partition across the entire degenerate set. And $k$-core decomposition provides a \emph{natural hierarchy}: the nested shells $H_1 \supseteq H_2 \supseteq \cdots$ reflect structural connectivity rather than comparison to a null model. In the sparse regime where $k_i \cdot k_j/2m \approx 0$ for most pairs, the null model is nearly vacuous yet $k$-core structure remains informative. The 2-core captures the backbone of multiply-connected nodes while degree-1 nodes form the periphery. In knowledge graphs, this distinction is semantically meaningful: multiple relational paths between entities signal genuine topical relatedness. Last, but not least, computing $k$-core is much cheaper than Leiden; a single pass over the graph suffices to compute all and a hierarchy~\cite{Sariyuce16}.

In GraphRAG, community assignments determine which entities are summarized together and which retrieval units are constructed. Exponential degeneracy means Leiden-based pipelines produce non-reproducible summaries: the same knowledge graph under different random seeds yields different communities and different retrieval behavior, with communities that are either too fragmented (splitting apart a natural topic) or too arbitrary (grouping unrelated peripheral nodes together because the optimization found a marginal modularity gain). The
$k$-core decomposition eliminates this variance. 
Its nested shells provide a small number of progressively tighter subgraphs where the innermost cores represent the most interconnected, and therefore most semantically central concepts, while the outer shells supply context. This maps naturally onto a summarization hierarchy: summarize the dense core first, then expand outward, producing stable, deterministic retrieval units from which summaries and indices can be reliably built.

\noindent{\bf Remark.} A natural refinement of $k$-core is the $k$-truss~\cite{cohen08}, which requires every edge to participate in at least $k-2$ triangles, yielding tighter subgraphs. However, knowledge graphs are triangle-poor: edges typically connect entities through distinct relation types (e.g., \textit{born\_in}, \textit{capital\_of}), producing bipartite-like local structure with global clustering coefficients well below 0.05. Even the 3-truss discards the vast majority of the graph, making the degree-based $k$-core criterion the right granularity for sparse relational graphs. A brief evaluation is provided in Appendix~\ref{sec:altcomdet}.

Motivated by these advantages, we propose a set of heuristics for efficient global sensemaking in GraphRAG. These heuristics leverage the $k$-core hierarchy to generate size-constrained clusters, merge small two-hop clusters, and handle residual components.

\begin{figure}[!t]
    \centering
    \includegraphics[width=0.48\textwidth]{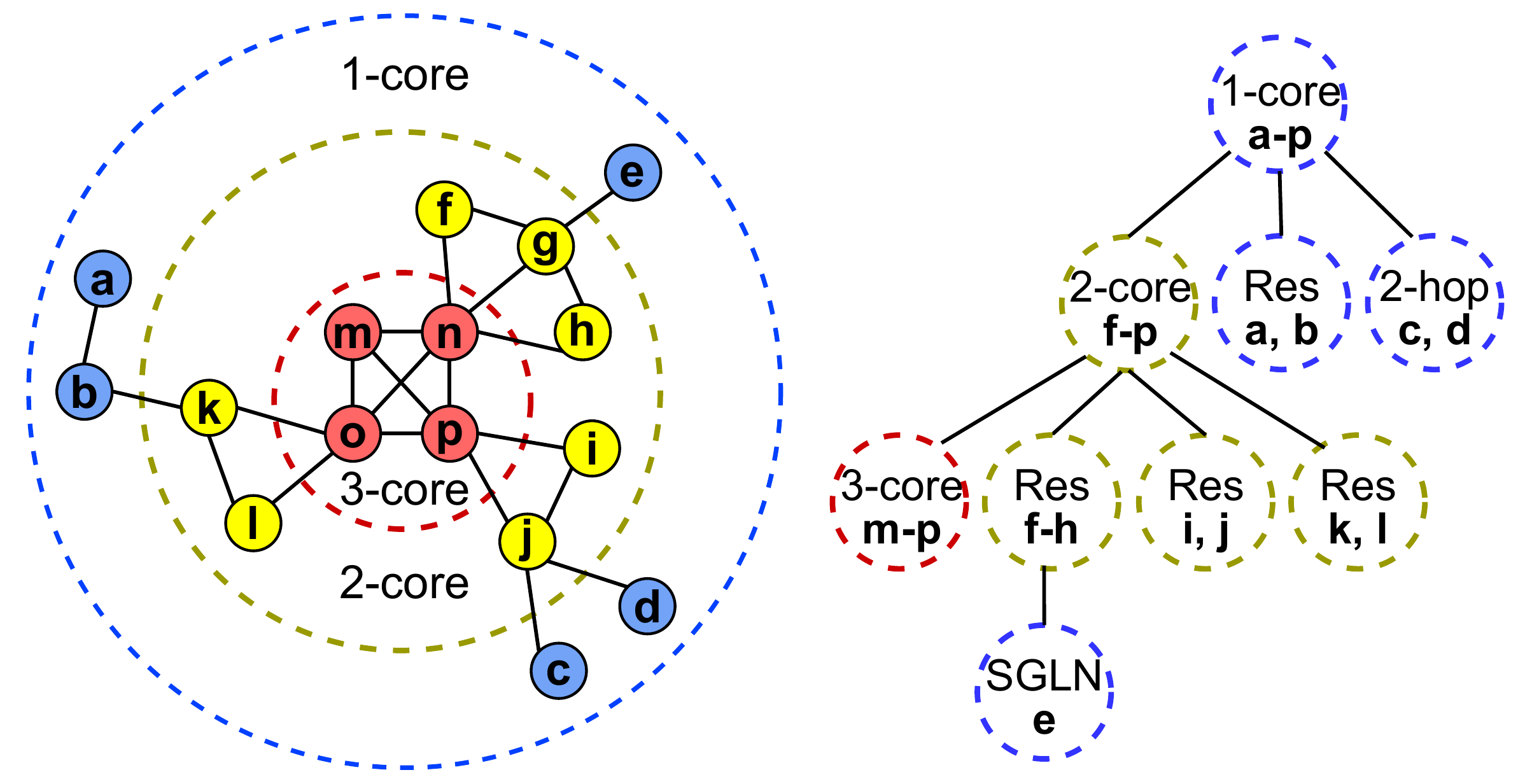} 
    \caption{$k$-core decomposition (left) and corresponding hierarchy tree produced by \KHC (right). 
}
    \label{fig:RkH_example}
\end{figure}

\subsection{Handling Residuals in $k$-core Hierarchy }

Our first heuristic builds on the $k$-core hierarchy while addressing practical issues such as singleton nodes and oversized clusters by separating dense cores from sparse residuals.
At each level, dense components are recursively partitioned into size-bounded, connectivity-preserving clusters, while low-core residuals are handled separately to avoid distorting core structure.
The result is a deterministic hierarchy that captures both central regions and peripheral context, enabling reliable hierarchical summarization.

Algorithm~\ref{alg:split_large} presents the resulting {\bf Residual-aware $k$-core Hierarchy} heuristic, \KHC in short.
It takes two parameters: the input graph ($G$), and the maximum cluster size ($M$).
\KHC begins by extracting the largest connected component and removing self-loops (line~\ref{line:lcc}), as done in GraphRAG~\cite{edge24}. It then computes the core number for each node (line~\ref{line:core}). 
The hierarchy construction is initialized by setting $\ell=1$, enqueuing full node set $V$ into $\mathcal{Q}$, and initializing an empty cluster set $\mathcal{C}$ and global singleton set $\mathcal{G}_s$ (lines~\ref{line:init_queue}–\ref{line:init_cluster}).

The graph is then processed iteratively over increasing core levels $\ell$ until the maximum core is reached (line~\ref{line:while_start}), splitting each cluster $S$ into core ($c(v)\ge\ell$) and residual ($c(v)<\ell$) nodes (line~\ref{line:sc_sr}).
Core components smaller than or equal to $M$ are added directly to $\mathcal{C}$ and the queue; larger components are split into size-bounded clusters via \textsc{Split} before adding (lines~\ref{line:for_KC}–\ref{line:add_KC}).
The maximum cluster size ($M$) is set by dividing the input token limit by the average tokens per node, providing a practical estimate that keeps most clusters within the allowed input size.
Residual nodes are handled similarly, either added directly or \textsc{Split}  as needed (lines~\ref{line:for_R}–\ref{line:add_R}); since these residual subgraphs correspond to leaf clusters, they are not pushed to the next-level queue.
The \textsc{Split} procedure breaks each oversized cluster into smaller, size-bounded clusters by greedily growing each cluster from a high-degree seed node, prioritizing neighbors that maximize internal connectivity. This ensures that dense regions remain intact while respecting the maximum cluster size; full algorithmic details are provided in Algorithm~\ref{alg:split_cc} in Appendix~\ref{sec:split_cc}.

\begin{algorithm}[!t]
\small
\caption{\KHC: Residual-aware $k$-core Hierarchy\,($G$,\,$M$)}
\label{alg:split_large}
\begin{algorithmic}[1]
\REQUIRE Graph $G=(V,E)$, max cluster size $M$ \label{line:req}
\ENSURE Hierarchical clusters $\mathcal{C}$ \label{line:ensure}

\STATE $G \leftarrow$ LCC$(G)$; remove self-loops \label{line:lcc}
\STATE Compute core number $c(v)$ for all $v \in V$ \label{line:core}
\STATE Initialize queue $\mathcal{Q} \leftarrow \{(V)\}$ and level $\ell \leftarrow 1$ \label{line:init_queue}
\STATE Initialize cluster set $\mathcal{C} \leftarrow \emptyset$, Global singleton $\mathcal{G}_s \leftarrow \emptyset $ \label{line:init_cluster}
\WHILE{$\ell \le \max c(v)$} \label{line:while_start}
  \STATE $\mathcal{Q}' \leftarrow \emptyset$ \label{line:qprime}
  \FOR{each cluster $S \in \mathcal{Q}$} \label{line:for_S}
    \STATE $S_c \leftarrow \{v\in S \mid c(v)\ge \ell\}$, $S_r \leftarrow S\setminus S_c$ \label{line:sc_sr}
    \FOR{each connected component $R$ of $G[S_c]$} \label{line:for_KC}
          \STATE $R' \gets
          \begin{cases}
      \{R\}, & |R| \le M \\
      \textsc{Split}(G, R, M) & \text{otherwise}
      \end{cases}$
        \STATE $\mathcal{C} \gets \mathcal{C} \cup R'$,  $\mathcal{Q}' \gets \mathcal{Q}' \cup R'$   \label{line:add_KC}
    \ENDFOR
    \FOR{each connected component $R$ of $G[S_r]$} \label{line:for_R}
      \STATE $\mathcal{C} \gets \mathcal{C} \cup 
      \begin{cases}
      \{R\}, & |R| \le M \\
      \textsc{Split}(G, R, M) & \text{otherwise}
      \end{cases}$ \label{line:add_R}
    \ENDFOR
           \STATE $S_{\text{single}} \gets \{ R \in \mathcal{C} \mid |R| = 1 \}$  \label{line:collect_singletons}
    \STATE $\mathcal{H} \gets \{ R \subseteq S_{\text{single}} \mid R \text{ is 2-hop connected in } G \}$ \label{line:form_2hop}
    \STATE $\mathcal{G}_s \gets S_{\text{single}} \setminus \bigcup_{R \in \mathcal{H}} R$ \label{line:collect_global_singletons}

      \label{line:group_2hop}
    \FOR{each 2-hop component $R$ of $\mathcal{H}$} \label{line:for_H}
      \STATE $\mathcal{C} \gets \mathcal{C} \cup 
      \begin{cases}
      \{R\}, & |R|\le M \\
      \textsc{Split-2hop}(G, R, M) & \text{otherwise}
      \end{cases}$ \label{line:add_H}
    \ENDFOR
  \ENDFOR
  \STATE $\mathcal{Q} \leftarrow \mathcal{Q}'$, $\ell \leftarrow \ell+1$ \label{line:update_queue}
\ENDWHILE \label{line:while_end}
\STATE Attach each of $\mathcal{G}_s$  to neighboring clusters in $\mathcal{C}$ \label{line:attach_singletons}

\RETURN $\mathcal{C}$ \label{line:return}
\end{algorithmic}
\end{algorithm}

Now, we extract singleton clusters from the existing set $\mathcal{C}$ (line~\ref{line:collect_singletons}) into $S_{\text{single}}$ and form new clusters $\mathcal{H}$ consisting of nodes that are 2-hop connected in $G$ (line~\ref{line:form_2hop}). 
A set $R$ is 2-hop connected if each node $u \in R$ has a path of length $\le 2$ in $G$ to some other node $v \in R$.
Any remaining singleton nodes not included in $\mathcal{H}$ are added to global set $\mathcal{G}_s$ (line~\ref{line:collect_global_singletons}).
Now, each 2-hop component from $\mathcal{H}$ is either added directly to $\mathcal{C}$ or split into smaller clusters via  \textsc{SPLIT-2HOP} if it exceeds size $M$ (lines~\ref{line:for_H}–\ref{line:add_H}). 

The \textsc{SPLIT-2HOP} procedure greedily splits a 2-hop connected component, ensuring that nodes sharing common anchors are grouped together.
Starting from a seed node with the largest anchor set, the algorithm greedily grows a cluster by adding nodes that share anchors with the current cluster, continuing until the cluster reaches the maximum size or no eligible nodes remain. 
Anchor nodes connected to multiple cluster members are then included to maintain structural coherence. 
The process produces size-bounded clusters that respect both 2-hop connectivity and anchor relationships. Full details are provided in Algorithm~\ref{alg:split2hop} in Appendix~\ref{sec:split2hop}.

After all clusters at the current level are processed, the queue is updated with the newly formed clusters, and the level counter is incremented (line~\ref{line:update_queue}). 
Once all core levels have been processed, a final attachment step assigns each node in $\mathcal{G}_s$ to a neighboring cluster in $\mathcal{C}$, ensuring that no isolated nodes remain in the hierarchy (line~\ref{line:attach_singletons}).
Finally, the algorithm returns the hierarchical cluster set $\mathcal{C}$ (line~\ref{line:return}). This approach ensures size-bounded, connectivity-preserving clusters while capturing both dense core structures and sparse residual nodes.

Figure~\ref{fig:RkH_example} shows an example hierarchy produced by \KHC.
The 1-core contains all nodes (a–p). 
At the next level, the 2-core consists of nodes (f–p), while the remaining nodes (a–e) form different clusters, including a connected component (a–b), a 2-hop connected group (c–d), and a singleton (e). 
At the 3-core level, only nodes (m–p) remain, and the residual nodes (f–l) are partitioned into three connected components (f–h), (i–j), and (k–l).
Note that the singleton (e) is attached to its neighboring cluster (f–h) at the end.
This recursive decomposition yields a hierarchy where higher $k$-cores capture denser subgraphs and residuals are organized into progressively finer-grained structures, as formalized in Algorithm~\ref{alg:split_large}.

\subsection{Handling Small Clusters}

Due to the sparse nature of knowledge graphs, many tiny clusters (often containing only two nodes) can arise from previous heuristic.
 As a result, when GraphRAG evaluates community relevance during query answering, these small clusters usually receive low scores. Consequently, they are often excluded from the final answer generation, which can affect overall sensemaking performance.
We propose to explicitly merge such small clusters.\\

\noindent {\bf Merging 2-hop Small Clusters.} In \KHC, some 2-hop clusters generated in lines~\ref{line:for_H}–\ref{line:add_H} may contain only two nodes. These very small clusters can fragment the hierarchy and reduce connectivity if left unprocessed. To address this, we apply a post-processing step after Algorithm~\ref{alg:split_large} that merges small 2-hop clusters into larger clusters whenever possible, or creates new clusters when no suitable neighbors exist.
The merging procedure is detailed in Algorithm~\ref{alg:merge_small_cluster}, referred to as \MHC.
First, all 2-hop clusters of size two are separated from the remaining clusters (lines~\ref{line:small_clusters}–\ref{line:large_clusters}).
 While small clusters remain (line~\ref{line:while_small}), the algorithm selects the small cluster with the most connections to existing clusters (line~\ref{line:pick_s}) and counts its neighbors in each cluster (line~\ref{line:count_neighbors}). 
If neighbors exist, the small cluster is merged in-place into the cluster with which it shares the most edges (lines~\ref{line:if_neighbors}–\ref{line:merge_cluster}).
Otherwise, a new cluster is created for the small cluster (line~\ref{line:no_neighbors}). After each iteration, the processed small cluster is removed from the pool (line~\ref{line:remove_s}). Finally, the algorithm returns the updated cluster set with all small clusters integrated (line~\ref{line:return_L}).\\

\begin{algorithm}[!t]
\caption{\MHC: Merge 2-hop Clusters ($G$,  $\mathcal{C}$)}
\label{alg:merge_small_cluster}
\begin{algorithmic}[1]
\REQUIRE Graph $G=(V,E)$, existing cluster set $\mathcal{C}$
\ENSURE Updated cluster set $\mathcal{C}$ with small clusters merged

\STATE $\mathcal{S} \gets$ All 2-hop clusters of size 2 in $\mathcal{C}$ \label{line:small_clusters}
\STATE $\mathcal{L} \gets \mathcal{C} \setminus \mathcal{S}$ \label{line:large_clusters}

\WHILE{$\mathcal{S} \neq \emptyset$} \label{line:while_small}
    \STATE Pick small cluster $C_s \in \mathcal{S}$ with the most neighbors in $\mathcal{L}$ \label{line:pick_s}
    \STATE Count neighbors of $C_s$ in each cluster of $L \in \mathcal{L}$ \label{line:count_neighbors}
\IF{any neighbors exist} \label{line:if_neighbors}
    \STATE Find $L_{best} \in \mathcal{L}$ with most neighbors of $C_s$  
     \STATE $L_{best} \gets L_{best} \cup C_s$ \label{line:merge_cluster}
\ELSE \label{line:no_neighbors}
    \STATE $\mathcal{L} \gets \mathcal{L} \cup C_s$ \label{line:new_cluster}
\ENDIF

    \STATE Remove $C_s$ from $\mathcal{S}$ \label{line:remove_s}
\ENDWHILE

\RETURN $\mathcal{L}$  \label{line:return_L}
\end{algorithmic}
\end{algorithm}

\noindent {\bf Merging Residual Clusters.} In addition to the previous heuristics, we observe that many residual connected components produced during \KHC (lines~\ref{line:for_R}–\ref{line:add_R}) have size two. To address this, we introduce an extended heuristic, referred to as \MRC.
\MRC is conceptually similar to \MHC, but instead of focusing solely on 2-hop clusters, it handles all small residual clusters of size two. The only modification occurs in line~\ref{line:small_clusters} of Algorithm~\ref{alg:merge_small_cluster}, where both residual clusters and 2-hop clusters of size two are collected into the set $\mathcal{S}$ for processing. The rest of the procedure remains unchanged: each small cluster is merged into a neighboring parent cluster when possible; otherwise, a new cluster is created only if no suitable neighbors exist.

Note that, for \MHC and \MRC, we focus specifically on clusters of size two, as they account for the majority of small clusters observed in practice. Clusters of size three or larger occur infrequently, and trying to merge would not yield measurable improvements; moreover, merging them would unnecessarily inflate neighboring clusters and weaken size constraints.

\subsection{Token Efficiency via Sampling}\label{subsec:alg-token}

In large hierarchical graph decompositions, leaf-level communities can be densely connected, and many nodes/edges within the same community frequently carry overlapping information. Passing all of this content to an LLM for retrieval or summarization can therefore be costly in terms of tokens, leading to inefficiency and redundancy. To address this challenge, we propose the \textit{Round-Robin Token-Constrained Selection} (RRTC) heuristic, which efficiently reduces token usage while preserving the most informative edges from each community.
RRTC operates on leaf-level communities produced by any hierarchy heuristic (e.g., \KHC, \MHC, or \MRC) and selects a representative subset of edges under a fixed token budget. Within each community, edges are ranked by the combined degree of their endpoints to reflect overall prominence. Selection then proceeds in a round-robin fashion across leaf-level communities, traversing from higher to lower $k$-shells, until the token budget is exhausted. By sampling a representative subset of edges from each community, RRTC captures the essential information without including all nodes/edges.
Our design is motivated by the $k$-shell-based influential node identification methods \cite{wang20,hossain23}, where round-robin selection across $k$-shells identifies diverse key nodes.

\begin{table}[t]
\caption{Summary statistics of datasets used in our experiments, showing model, dataset, number of documents, and graph information ($|V|$: nodes, $|E|$: edges, $K_m$: maximum core number). Post-cutoff data is used for \threepointfive (cutoff: Sep 1, 2021) and \four (cutoff: Oct 1, 2023), while full data is used for \five.
}
\label{tab:datasets_stats} 
\resizebox{0.45\textwidth}{!}{%
\centering
\begin{tabular}{c l r r r r}
\toprule
\makecell{\textbf{Model}} &
\makecell{\textbf{Dataset}} &
\makecell{\textbf{\# Docs}} &
\makecell{\textbf{|V|}} &
\makecell{\textbf{|E|}} &
\makecell{$\mathbf{K}_m$} \\
\hline
\multirow{3}{*}{\makecell{\threepointfive \\ (post-cutoff)}}
 & \podcast & 36  & 1791  & 2651  & 6 \\
                 & \news    & 609 & 10655 & 15368 & 6 \\
                 & \semi    & 183 & 4324  & 9546  & 13 \\
\hline
\multirow{3}{*}{\makecell{\four \\ (post-cutoff)}}          & \podcast & 13  & 628   & 882   & 5 \\
                 & \news    & 331 & 5312  & 7316  & 6 \\
                 & \semi    & 63  & 2029  & 4249  & 9 \\
\hline
\multirow{3}{*}{\makecell{\five \\ (full)}}          & \podcast & 72  & 3323  & 7761  & 7 \\
                 & \news    & 609 & 4499 & 7941  & 8 \\
                 & \ms      & 45  & 790   & 1702  & 7 \\
\hline
\end{tabular}
}
\end{table}

\section{Experimental Setup}

In this section, we describe the experimental setup, including the datasets and evaluation approach.
Since our study targets the global sensemaking task, we selected three datasets requiring reasoning over multiple documents. The first two align with the original study of \citeauthor{edge24}, while the third also requires multi-document reasoning. Dataset sizes range from approximately 1M to 6M tokens, all processed using the same knowledge graph construction pipeline used by \cite{edge24}. Details of these datasets are reported in Table~\ref{tab:datasets_stats}.

 \noindent \textbf{Podcast transcripts (\podcast).} Public transcripts of \textit{Behind the Tech with Kevin Scott}, a podcast featuring conversations between Microsoft CTO Kevin Scott and various thought leaders in science and technology \cite{Scott24}. The corpus contains 72 documents with approximately 1 million tokens in total.
     
\noindent\textbf{News articles (\news).} A benchmark dataset of news articles spanning multiple categories, including entertainment, business, sports, technology, health, and science \cite{tang24}. 
  It is a multi-document QA dataset containing 609 documents and nearly 1.4 million tokens.
  
 \noindent\textbf{S\&P 500 earnings transcripts.} This dataset consists of earnings call transcripts from publicly listed S\&P 500 companies spanning 2013–2025 \cite{ca25}, with over 20,000 documents totaling 232M tokens.
For \threepointfive and \four, to focus on multi-document synthesis and keep the dataset manageable, we use only post-cutoff transcripts from the Semiconductor industry ({\bf \semi}), yielding 183 and 63 documents, respectively. For \five, since the full \semi dataset is large (nearly 6M tokens), we use Microsoft earnings calls from 2013–2024 ({\bf \ms}) to limit corpus size.

\subsection{Evaluation Criteria}
Following the methodology of \cite{edge24}, we generated 125 sensemaking questions using \five (details are in Appendix \ref{sec:question_generation}).  
Consistent with prior work~\cite{edge24,tiwari25,kim25}, we evaluate answers using two primary criteria: \textbf{Comprehensiveness}, which measures how thoroughly an answer addresses all relevant aspects of a question, and \textbf{Diversity}, which measures the extent to which an answer captures varied perspectives and insights.  
\citeauthor{edge24}~\cite{edge24} also report \textit{Empowerment} (ability to support informed judgment) and \textit{Directness} (degree to which an answer directly addresses the question); results for these metrics are provided in the 
Appendix (Table \ref{tab:gpt35_empowerment_directness}).  

\textbf{Retrieval approaches and evaluation.}   \label{sec:evaluation_method}
We focus on \textbf{C2} and \textbf{C3} from Section~\ref{sec:back}, which perform best overall, and evaluate all $k$-core heuristics—\KHC, \MHC, and \MRC—against them. 
Higher $k$-core levels typically yield finer-grained communities, improving answer quality via more detailed summaries (
see Appendix, Table~\ref{tab:gpt35_msft_mhc_h2h}). Hence, for all heuristics, we consider the Leaf and immediate parent levels, labeled LF and L1 (e.g., \KHC LF).  
For evaluation, we use a head-to-head framework where LLMs compare answers to select a winner, loser, or tie. To improve reliability, we use multiple evaluators and perform repeated assessments, with majority voting determining the final outcome—suitable for global sensemaking tasks without gold-standard references (see Appendix~\ref{sec:evaluation_details} for details).

\textbf{Configuration.}
For fair comparison with \cite{edge24}, we performed indexing using a 600-token chunking window with 100-token overlap and an 8k-token context window for community summaries, following their approach.
We utilized the public Azure OpenAI endpoint and the OpenAI API for \threepointfive, \four, and \five. For the multiple-judge evaluation, \five was accessed via Azure OpenAI, while the other four LLMs (Gemini 3 Pro Preview, Gemini 2.5 Pro, Qwen3 Next 80B, and DeepSeek v3.2) were accessed via GCP.  
All LLM-based evaluations were performed on a GCP virtual machine equipped with 32 GB memory and 16 vCPUs (Intel Haswell architecture).
Detailed prompts for graph construction, community summaries, and global answer generation (following \cite{edge24}), along with our code, are available here: \textbf{\href{https://github.com/erdemUB/KDD26}{https://github.com/erdemUB/KDD26}}.

\begin{table*}[!h]
\centering
\caption{\threepointfive post-cutoff: Head-to-head win rates (\%), for comprehensiveness and diversity metrics. C2 and C3 are Leiden community levels from Edge et al.~\cite{edge24}. LF is leaf-level communities, and L1 is the level immediately above the leaf.
 }
\label{tab:gpt35}
\resizebox{0.9\textwidth}{!}{%
\footnotesize
\begin{tabular}{|r|ccc|ccc|ccc|ccc|ccc|ccc|}
\hline
\threepointfive
& \multicolumn{6}{c|}{\podcast} 
& \multicolumn{6}{c|}{\news} 
& \multicolumn{6}{c|}{\semi} \\
\cline{2-19}
results for
& \multicolumn{3}{c|}{C2} & \multicolumn{3}{c|}{C3}
& \multicolumn{3}{c|}{C2} & \multicolumn{3}{c|}{C3}
& \multicolumn{3}{c|}{C2} & \multicolumn{3}{c|}{C3} \\
\cline{2-19}
{\bf Comprehensiveness}
& Win & Loss & Tie & Win & Loss & Tie
& Win & Loss & Tie & Win & Loss & Tie
& Win & Loss & Tie & Win & Loss & Tie \\
\hline
\KHC L1   
& \heatmap{56} & \heatmap{36} & \heatmap{8} & \heatmap{50} & \heatmap{42} & \heatmap{8}
& \heatmap{44} & \heatmap{44} & \heatmap{12} & \heatmap{52} & \heatmap{38} & \heatmap{10}
& \heatmap{48} & \heatmap{48} & \heatmap{4} & \heatmap{56} & \heatmap{42} & \heatmap{2} \\

\KHC LF   
& \heatmap{58} & \heatmap{42} & \heatmap{0} & \heatmap{44} & \heatmap{52} & \heatmap{4}
& \heatmap{52} & \heatmap{42} & \heatmap{6} & \heatmap{49} & \heatmap{47} & \heatmap{4}
& \heatmap{50} & \heatmap{50} & \heatmap{0} & \heatmap{50} & \heatmap{48} & \heatmap{2} \\

\MHC L1  
& \heatmap{58} & \heatmap{36} & \heatmap{6} & \heatmap{52} & \heatmap{42} & \heatmap{6}
& \heatmap{56} & \heatmap{34} & \heatmap{10} & \heatmap{50} & \heatmap{36} & \heatmap{14}
& \heatmap{58} & \heatmap{42} & \heatmap{0} & \heatmap{48} & \heatmap{42} & \heatmap{10} \\

\MHC LF  
& \heatmap{58} & \heatmap{40} & \heatmap{2} & \heatmap{52} & \heatmap{44} & \heatmap{4}
& \heatmap{54} & \heatmap{38} & \heatmap{8} & \heatmap{56} & \heatmap{38} & \heatmap{6}
& \heatmap{54} & \heatmap{44} & \heatmap{2} & \heatmap{54} & \heatmap{42} & \heatmap{4} \\

\MRC L1 
& \heatmap{50} & \heatmap{44} & \heatmap{6} & \heatmap{48} & \heatmap{48} & \heatmap{4}
& \heatmap{50} & \heatmap{42} & \heatmap{8} & \heatmap{42} & \heatmap{46} & \heatmap{12}
& \heatmap{54} & \heatmap{40} & \heatmap{6} & \heatmap{50} & \heatmap{46} & \heatmap{4} \\

\MRC LF 
& \heatmap{50} & \heatmap{50} & \heatmap{0} & \heatmap{53} & \heatmap{43} & \heatmap{4}
& \heatmap{42} & \heatmap{48} & \heatmap{10} & \heatmap{48} & \heatmap{48} & \heatmap{4}
& \heatmap{60} & \heatmap{40} & \heatmap{0} & \heatmap{67} & \heatmap{29} & \heatmap{4} \\
\hline
{\bf Diversity}
& Win & Loss & Tie & Win & Loss & Tie
& Win & Loss & Tie & Win & Loss & Tie
& Win & Loss & Tie & Win & Loss & Tie \\
\hline
\KHC L1   
& \heatmap{46} & \heatmap{50} & \heatmap{4} & \heatmap{52} & \heatmap{44} & \heatmap{4}
& \heatmap{46} & \heatmap{52} & \heatmap{2} & \heatmap{36} & \heatmap{58} & \heatmap{6}
& \heatmap{52} & \heatmap{48} & \heatmap{0} & \heatmap{48} & \heatmap{52} & \heatmap{0} \\

\KHC LF   
& \heatmap{56} & \heatmap{42} & \heatmap{2} & \heatmap{44} & \heatmap{52} & \heatmap{4}
& \heatmap{48} & \heatmap{44} & \heatmap{8} & \heatmap{56} & \heatmap{44} & \heatmap{0}
& \heatmap{46} & \heatmap{52} & \heatmap{2} & \heatmap{44} & \heatmap{50} & \heatmap{6} \\

\MHC L1  
& \heatmap{64} & \heatmap{32} & \heatmap{4} & \heatmap{56} & \heatmap{42} & \heatmap{2}
& \heatmap{44} & \heatmap{52} & \heatmap{4} & \heatmap{44} & \heatmap{50} & \heatmap{6}
& \heatmap{54} & \heatmap{46} & \heatmap{0} & \heatmap{50} & \heatmap{50} & \heatmap{0} \\

\MHC LF  
& \heatmap{66} & \heatmap{34} & \heatmap{0} & \heatmap{55} & \heatmap{44} & \heatmap{1}
& \heatmap{56} & \heatmap{42} & \heatmap{2} & \heatmap{61} & \heatmap{37} & \heatmap{2}
& \heatmap{52} & \heatmap{48} & \heatmap{0} & \heatmap{54} & \heatmap{36} & \heatmap{10} \\

\MRC L1 
& \heatmap{50} & \heatmap{50} & \heatmap{0} & \heatmap{46} & \heatmap{54} & \heatmap{0}
& \heatmap{62} & \heatmap{38} & \heatmap{0} & \heatmap{48} & \heatmap{44} & \heatmap{8}
& \heatmap{56} & \heatmap{40} & \heatmap{4} & \heatmap{54} & \heatmap{44} & \heatmap{2} \\

\MRC LF 
& \heatmap{52} & \heatmap{48} & \heatmap{0} & \heatmap{49} & \heatmap{51} & \heatmap{0}
& \heatmap{52} & \heatmap{48} & \heatmap{0} & \heatmap{56} & \heatmap{44} & \heatmap{0}
& \heatmap{62} & \heatmap{38} & \heatmap{0} & \heatmap{69} & \heatmap{28} & \heatmap{3} \\
\hline
\end{tabular}
}
\vspace{-2ex}
\end{table*}

\section{Results and Analysis} \label{sec:results_analysis}

Knowledge cutoff is a key consideration in RAG evaluation \cite{yang26, seo25}, as models may have already seen the corpus if it predates their training data. To minimize such contamination, we prioritize \textbf{post-cutoff data} when selecting both datasets and evaluation models. 

Our primary evaluation (Section~\ref{subsec:pri}) uses \threepointfive and \four
across three post-cutoff datasets: \podcast, \news, and \semi.
 This setup ensures a fair comparison of GraphRAG configurations under realistic retrieval and sensemaking conditions, while minimizing the influence of memorized knowledge.
To further verify that our heuristics generalize to more recent, stronger models, we conduct a secondary evaluation using \five on the full datasets, to account for the limited post-cutoff content (Section~\ref{subsec:sec}).
We then report the statistical significance of our $k$-core–based heuristics (Section~\ref{subsec:stats}). Finally, we analyze token efficiency compared to Edge et al.~\cite{edge24} and evaluate our \textit{Round-Robin Token-Constrained Selection} (RRTC) heuristic that aims to reduce token usage while maintaining competitive performance (Section~\ref{subsec:token}).

\subsection{Results on Post–Cutoff Data}\label{subsec:pri}

In this section, we evaluate the performance of our heuristics vs. Leiden-based GraphRAG~\cite{edge24} on three datasets under a post–knowledge-cutoff setting, using \threepointfive and \four, according to the two metrics defined earlier: {comprehensiveness} and {diversity}.
We also perform factual grounding analysis, human validation, sensitivity analysis for maximum cluster size, and runtime analysis. 

Table~\ref{tab:gpt35} summarizes the \threepointfive results.
We report head-to-head comparisons of our $k$-core based heuristics (in rows) vs. Leiden-based GraphRAG~\cite{edge24} (columns) on post-knowledge-cutoff content (September 2021), covering 183 documents for \semi, 609 for \news, and 36 for \podcast.
For each of our heuristics, we consider the leaf (LF) and parent-of-leaf (L1) levels, and for Leiden-based GraphRAG, we consider C2 (third from top) and C3 (fourth from top) levels, which consistently perform well (see Section~\ref{sec:back} for details).
 Results on C0 and C1 are given in the 
Appendix (Table \ref{tab:gpt3_c0_c1}).

Across all datasets and configurations, our $k$-core–based heuristics achieve higher win rates than the C2 and C3 configurations of \citeauthor{edge24} in approximately \textbf{70--75\%} of comparisons.
 Ties are generally infrequent (typically below \textbf{10\%}) and nearly absent for \semi, indicating decisive preferences.
{\bf \MHC LF is the most consistently strong configuration in the evaluation}. It never records a negative net win rate across any dataset, condition, or metric. On comprehensiveness, it maintains steady positive margins (+8 to +18 net) across all three datasets against both C2 and C3, with no sharp degradation between conditions.
Performance is even stronger for diversity: it achieves the highest \podcast win rate in the table (66\% against C2), delivers strong \news results that actually improve from C2 to C3 (reaching 61/37 in C3), and posts solid \semi gains especially in C3 (+18 net).
Dataset-wise, gains are largest on \semi, where \MRC LF averages \textbf{$\sim$64\%} wins and peaks at \textbf{67--69\%} against C3. 
For \podcast and \news, \MHC LF performs best, averaging \textbf{$\sim$57\%} and \textbf{$\sim$56\%}, respectively. 
Across datasets, our heuristics outperform C3 more strongly than C2 by roughly \textbf{3--6\%} on average.

Averaged across datasets and community levels, leaf-level (LF) variants consistently outperform their L1 counterparts by \textbf{5--10} percentage points, confirming that finer-grained communities lead to more informative and diverse summaries. 
This effect is particularly pronounced on \podcast and \semi, where LF configurations frequently exceed \textbf{55--60\%} win rates across both metrics.
Overall, \textbf{\MHC LF} and \textbf{\MRC LF} emerge as the strongest heuristics: \MHC LF shows the most consistent gains for diversity on \podcast and \news, while \MRC LF dominates comprehensiveness on \semi.

We conduct the same evaluation using \four, restricting all datasets to content published after its knowledge cutoff (October 2023). This results in 63 documents for \semi, 331 documents for \news, and 13 documents for  \podcast.
Overall, the trends remain consistent with \threepointfive, though performance margins are narrower and ties more frequent, as expected given that \four is a stronger model and the post-cutoff subsets are smaller.
The detailed results are given in the 
Appendix (Table \ref{tab:gpt4}).\\

\noindent{\bf Analysis of Factual Grounding.}
Our method maintains explicit entity/node-level grounding (similar to GraphRAG \cite{edge24}): (1) community reports store entity-linked information during indexing, and (2) generated answers include explicit references (e.g., [Data: Reports (255, 270, 287, 507, 171, +more)]) for each claim.

To evaluate grounding quality, we conducted an analysis on 25 randomly selected questions per dataset. We compute Claim Support Rate (CSR), defined as the fraction of answer claims supported by the referenced nodes, where an LLM verifies each claim against its cited nodes. Since some references are truncated via "+more'', we conservatively evaluate only the explicitly listed IDs. Under this setting, both $k$-core and Leiden achieve CSR $\approx 0.89$ on average, with the remaining cases attributable to evidence contained in the unexpanded "+more” references. These results indicate that both methods maintain strong and comparable factual grounding, while $k$-core achieves better performance on the primary global sensemaking tasks.\\

\noindent {\bf Human Validation of LLM Evaluations.}
No ground-truth dataset exists for global sensemaking tasks such as comprehensiveness and diversity, as these require aggregating information from hundreds of documents (up to 609 in our datasets), with answers ranging from 1,000–2,000 tokens, making manual verification nearly infeasible. Existing works, including the original GraphRAG \cite{edge24}, also rely on LLM judges for evaluation, while we strengthen our evaluation considering multiple judges/runs, randomize answer order to reduce bias, and apply post-cutoff filtering to mitigate memorization effects. Additionally, we conducted a small human evaluation on 25 randomly selected questions per dataset on the GPT-3.5 setup. The results show that human judgments agree with the LLM majority vote in 96\% of cases overall (Cohen’s $k = 0.94$), confirming that LLM-based comparisons reliably reflect answer quality.\\

\noindent {\bf Sensitivity Analysis of the Maximum Cluster Size.}
We performed a sensitivity analysis on the \podcast dataset to evaluate the effect of the maximum cluster size parameter, $M$, which is set heuristically as explained in Section~\ref{sec:proposed_heuristics} to ensure that important information is not truncated. $+5/+10$ change in $M$ reduces the win rate by $2$--$3\%$, likely due to context overflow, while $-5/-10$ change in $M$ increases the number of clusters and LLM calls without affecting overall performance. These results validate our choice of $M$.\\

\begin{table}[!t]
\centering
\caption{\five full: Head-to-head win rates (\%), for comprehensiveness and diversity metrics against Leiden C2 level from Edge et al.~\cite{edge24}. LF indicates leaf-level communities, and L1 indicates the level immediately above the leaf.}
\label{tab:gpt5_C2}
\resizebox{.45\textwidth}{!}{%
\begin{tabular}{|r|ccc|ccc|ccc|}
\hline
\five
& \multicolumn{3}{c|}{\podcast} 
& \multicolumn{3}{c|}{\news} 
& \multicolumn{3}{c|}{\ms} \\
\cline{2-10}
{\bf Comprehen.} 
& Win & Loss & Tie
& Win & Loss & Tie
& Win & Loss & Tie 
\\
\hline
\KHC L1   & \heatmap{41} & \heatmap{47} & \heatmap{12} & \heatmap{56} & \heatmap{42} & \heatmap{2} & \heatmap{43} & \heatmap{43} & \heatmap{14} \\
\KHC LF   & \heatmap{50} & \heatmap{48} & \heatmap{2}  & \heatmap{50} & \heatmap{45} & \heatmap{5} & \heatmap{45} & \heatmap{40} & \heatmap{15} \\
\MHC L1   & \heatmap{40} & \heatmap{52} & \heatmap{8}  & \heatmap{40} & \heatmap{34} & \heatmap{26} & \heatmap{40} & \heatmap{45} & \heatmap{15} \\
\MHC LF   & \heatmap{52} & \heatmap{42} & \heatmap{6}  & \heatmap{40} & \heatmap{40} & \heatmap{20} & \heatmap{48} & \heatmap{50} & \heatmap{2}  \\
\MRC L1   & \heatmap{48} & \heatmap{50} & \heatmap{2}  & \heatmap{42} & \heatmap{38} & \heatmap{20} & \heatmap{45} & \heatmap{48} & \heatmap{7}  \\
\MRC LF   & \heatmap{54} & \heatmap{44} & \heatmap{2}  & \heatmap{42} & \heatmap{36} & \heatmap{22} & \heatmap{45} & \heatmap{35} & \heatmap{20} \\
\hline
{\bf Diversity} 
& Win & Loss & Tie
& Win & Loss & Tie
& Win & Loss & Tie 
\\\hline
\KHC L1   & \heatmap{40} & \heatmap{44} & \heatmap{16} & \heatmap{50} & \heatmap{44} & \heatmap{6}  & \heatmap{48} & \heatmap{50} & \heatmap{2}  \\
\KHC LF   & \heatmap{50} & \heatmap{48} & \heatmap{2}  & \heatmap{45} & \heatmap{43} & \heatmap{12} & \heatmap{38} & \heatmap{45} & \heatmap{17} \\
\MHC L1   & \heatmap{37} & \heatmap{51} & \heatmap{12} & \heatmap{40} & \heatmap{38} & \heatmap{22} & \heatmap{45} & \heatmap{38} & \heatmap{17} \\
\MHC LF   & \heatmap{56} & \heatmap{36} & \heatmap{8}  & \heatmap{40} & \heatmap{37} & \heatmap{23} & \heatmap{52} & \heatmap{46} & \heatmap{2}  \\
\MRC L1   & \heatmap{48} & \heatmap{50} & \heatmap{2}  & \heatmap{36} & \heatmap{40} & \heatmap{24} & \heatmap{55} & \heatmap{42} & \heatmap{3}  \\
\MRC LF   & \heatmap{54} & \heatmap{44} & \heatmap{2}  & \heatmap{37} & \heatmap{37} & \heatmap{26} & \heatmap{40} & \heatmap{43} & \heatmap{17} \\
\hline
\end{tabular}
}
\end{table}

\noindent {\bf Runtime Analysis.}
Graph construction is identical across methods; runtime differences arise from community/hierarchy construction and the number of LLM calls, with more communities increasing cost. We measured wall-clock indexing time across the three datasets and find that $k$-core reduces hierarchy construction overhead by approximately $31\%$ on average compared to Leiden, demonstrating its practical efficiency. Additionally, as each LLM call has a similar cost, total runtime scales with the number of communities. Since $k$-core yields approximately $30$--$35\%$ fewer LLM calls on average (see Table~\ref{tab:community_stats_gpt3.5_post_cutoff}), it further improves end-to-end efficiency.

\subsection{Evaluation on Full Data by \five}\label{subsec:sec}

In our secondary experiments, we use \five to examine whether the trends observed with earlier models persist with a stronger, more recent evaluator. 
Due to \five’s later training window, none of the datasets have sufficient post-cutoff content. 
Hence, we evaluate on the full \news, \podcast, and \ms (since \semi would create an excessively large corpus).

Table~\ref{tab:gpt5_C2} reports head-to-head results between Leiden C2 and our $k$-core heuristics;  C3 results are given in 
Appendix  (Table \ref{tab:gpt5_full_c3}).
While our heuristics continue to show improvements, the high proportion of ties suggests that prior knowledge reduces the discriminative power of head-to-head evaluation.
Across datasets and metrics, leaf-level variants generally outperform L1 by {2--6} percentage points, with \MHC LF showing the strongest gains for diversity on \podcast and \MRC LF performing best for comprehensiveness on \ms.
While {\bf \MHC LF still achieves the strongest overall performance, \MRC LF
edges out \MHC LF on \podcast and \ms}.
Overall win rates remain around \textbf{45--55\%}, reflecting narrower margins
than with \threepointfive and slightly wider than with \four.
Several L1 configurations fall below 50\%, indicating that gains
are concentrated in leaf-level variants. Nevertheless, the directional consistency
with earlier evaluators suggests that $k$-core-based heuristics retain an
advantage, albeit a modest one, even under a stronger model whose prior knowledge
reduces the discriminative power of head-to-head evaluation.

\subsection{Statistical Analysis}\label{subsec:stats}

To assess statistical significance, we follow the procedure of \cite{edge24} and report $p$-values for pairwise comparisons of each heuristic against C2/C3 baselines on the comprehensiveness metric using \threepointfive post-cutoff data in Table~\ref{tab:p_values_summary} (details and diversity results are in Appendix~\ref{sec:p_values_appendix}
). Values with $p<0.005$ are highlighted in bold. Overall, \KHC and \MHC are significant on \texttt{podcast} and \texttt{news}, while \MHC and \MRC perform best on \texttt{semiconductor}. Notably, {\MHC} LF is the only heuristic significant ($p<0.005$) against both C2 and C3 across all datasets.\\

\begin{table}[!t]
\centering
\caption{Wilcoxon signed-rank test p-values for comprehensiveness (\threepointfive) across datasets and baselines. Values with $p<0.005$ are highlighted in \textbf{bold}.}
\label{tab:p_values_summary}
\resizebox{.46\textwidth}{!}{%
\begin{tabular}{p{.8cm} c c | c c | c c}
\hline
\textbf{Cond.} & \multicolumn{2}{c|}{\podcast} & \multicolumn{2}{c|}{\news} & \multicolumn{2}{c}{\semi} \\
 & C2 & C3 & C2 & C3 & C2 & C3 \\
\hline
\KHC L1   & \textbf{<0.001} & \textbf{<0.001} & 1.000  & \textbf{<0.001} & 1.000  & \textbf{<0.001} \\
\KHC LF   & \textbf{<0.001} & \textbf{<0.001} & \textbf{<0.001} & 0.370  & 0.982  & 0.363  \\
\MHC L1   & \textbf{<0.001} & \textbf{<0.001} & \textbf{<0.001} & \textbf{<0.001} & \textbf{<0.001} & 0.006  \\
\MHC LF   & \textbf{<0.001} & \textbf{<0.001} & \textbf{<0.001} & \textbf{<0.001} & \textbf{<0.001} & \textbf{<0.001} \\
\MRC L1   &  \textbf{0.007}  & 1.000  & \textbf{<0.001} & 0.069  & \textbf{<0.001} & 0.073  \\
\MRC LF   & 0.982  & \textbf{<0.001} & 0.007  & 1.000  & \textbf{<0.001} & \textbf{<0.001} \\
\hline
\end{tabular}%
}
\end{table}

\noindent {\bf Summary.} The results above show that the performance gap between our heuristics and Leiden-based C2/C3 is systematic rather than incidental. Knowledge graphs exhibit extremely sparse graph structures; e.g., average degrees in our datasets range from 2.88 to 4.42 and 55-60\% of nodes have a degree of just 1. This high prevalence of degree-1 nodes means that the majority of nodes have only a single connection, providing insufficient local structure for Leiden's modularity optimization to form meaningful communities, as the algorithm relies on dense local neighborhoods to effectively aggregate nodes into clusters.
As Theorem~\ref{thm:main} shows, Leiden’s degeneracy on such sparse graphs makes it unlikely to find the ``right'' partition, as there are exponentially many near-equivalent solutions, most of which group peripheral nodes incoherently.
In contrast, $k$-core–based heuristics produce a deterministic hierarchy that is
structurally grounded, explaining the consistent advantage of our methods across
\threepointfive, \four, and \five evaluations.
Note that nearly 45\% of clusters from \KHC have size 2, so \MHC and \MRC help improve performance by merging these, reducing fragmentation. Clusters of size 3 or 4 account for only 10–12\%, and merging them offers little benefit while inflating neighboring clusters, so they were excluded from our small-cluster definition.
Under \threepointfive, these differences are
statistically significant across all three datasets ($p < 0.005$ for \MHC LF),
and directionally consistent under \four and \five, where stronger prior knowledge
narrows the margins but does not eliminate the advantage.

\begin{table}[!t]
  \centering
  \caption{Community statistics across datasets, showing the number of communities and the percentage of source text used at different hierarchy levels and methods.}
  \label{tab:community_stats_gpt3.5_post_cutoff}
  \resizebox{.47\textwidth}{!}{%
  \begin{tabular}{lrrrrrrr}
  \hline
    Dataset & \# Tokens & Metric & C2 & C3 & \KHC LF & \MHC LF &  \MRC LF  \\
   \hline
\podcast & 180,267 & \# Comm. 
    & 291 & 301 & 351 & 311 & 194 \\
   &  & Coverage (\%)
    & 68.93 & 70.87 & 83.70 & 75.47 & 58.08 \\
  \hline
  \news & 1,340,078 & \# Comm.
    & 2,019 & 2,021 & 2,205 & 1,863 & 1,357 \\
   &  & Coverage (\%)
    & 65.90 & 71.26 & 76.46 & 65.74 & 55.08 \\
  \hline
  {\tt semic.} & 438,819 & \# Comm.
    & 627 & 696 & 847 & 705 & 468 \\
   &  & Coverage (\%)
    & 66.81 & 72.67 & 90.18 & 76.45 & 59.94 \\

   \hline
  \end{tabular}
  }
\end{table}

\subsection{Token Utilization and Impact of RRTC} \label{subsec:token}

Finally, we analyze how the evaluated methods consume tokens, and how our round-robin token-constrained selection (RRTC) mechanism (see Section~\ref{subsec:alg-token}) performs.
We first analyze the number of communities produced by each heuristic, along with the percentage of source tokens contained in those communities. These are the communities and tokens that are used to answer queries via LLM calls: fewer communities reduce the number of LLM calls, and lower token counts reduce overall LLM token usage.
Results on \threepointfive post-cutoff data are shown in Table~\ref{tab:community_stats_gpt3.5_post_cutoff} for leaf-level, which typically contain more communities and use more tokens than L1.
Notably, \MRC reduces both the number of communities and the tokens used: it consistently produces the fewest communities and lowest coverage (55–60\%), meaning it aggressively consolidates information.
 These reductions indicate that our heuristics use fewer tokens and require fewer LLM calls, while achieving comparable or improved performance on global sensemaking, as shown in the previous section.\\

\noindent {\bf Impact of Round-Robin Token-Constrained Selection (RRTC).} 
Table~\ref{tab:rr_l2tc_h2h} reports results for \MHC LF with RRTC across three datasets, compared against Leiden C2/C3 under three edge budgets: 80\%, 70\%, and 60\% of the original graph. 
The table reports token usage relative to C2/C3 and head-to-head win rates against them for comprehensiveness and diversity, where win rates are computed against losses only (ties excluded); thus, a 55\% win rate implies a 45\% loss rate.

Across datasets, RRTC maintains competitive performance even when selecting only 60\% of graph edges. Comprehensiveness win rates remain in the 50--56\% range, with the highest values on \semi at the 80\% budget, reaching 56\% and 58\% against C2 and C3, respectively.
Diversity is similarly preserved, often exceeding 55\% on \podcast and \semi despite substantial edge reduction. 
As the edge budget decreases from 80\% to 60\%, win rates gradually decline, with comprehensiveness on \podcast dropping below 50\%, indicating that a 60\%
budget may be too aggressive for some dataset--metric combinations. 
Overall, RRTC reduces token usage by up to $\sim$\textbf{40\%} relative to Leiden C2/C3 while maintaining competitive comprehensiveness and diversity, providing a practical, token-efficient alternative under moderate budget constraints.

\begin{table}[h]
\centering
\small
\caption{RRTC (\MHC Leaf) vs. Leiden C2/C3: Head-to-head win rates (\%) on comprehensiveness and diversity}
\label{tab:rr_l2tc_h2h}
\resizebox{.46\textwidth}{!}{%
\begin{tabular}{p{.8cm}c cc|cc | cc}
\hline
\textbf{Dataset} & \textbf{Edge} & \multicolumn{2}{c|}{\textbf{Relative token}} & \multicolumn{2}{c|}{\textbf{Comp.}} & \multicolumn{2}{c}{\textbf{Div.}} \\
& \textbf{budget} & \multicolumn{2}{c|}{\textbf{usage (\%)}} & \multicolumn{2}{c|}{\textbf{win rate (\%)}} & \multicolumn{2}{c}{\textbf{win rate (\%)}} \\

\cline{3-8}
                 &     (\%)       & C2 & C3 & C2 & C3 & C2 & C3 \\
\hline
\multirow{3}{*}{\podcast} 
 & 80 & 90 & 88 & 57 & 55 & 64 & 55 \\
 & 70 & 79 & 77 & 54 & 52 & 56 & 52 \\
 & 60 & 68 & 66 & 47 & 46 & 51 & 49 \\
\hline
\multirow{3}{*}{\news} 
 & 80 & 81 & 75 & 54 & 54 & 55 & 58 \\
 & 70 & 71 & 65 & 49 & 54 & 52 & 56 \\
 & 60 & 60 & 56 & 52 & 50 & 46 & 50 \\
\hline
\multirow{3}{*}{\tt semic.}
 & 80 & 92 & 85 & 56 & 58 & 55 & 60 \\
 & 70 & 81 & 74 & 53 & 53 & 50 & 57 \\
 & 60 & 69 & 64 & 55 & 50 & 52 & 54 \\
\hline
\end{tabular}
}
\end{table}

\section{Conclusion}\label{sec:conc}
We proved that modularity optimization is inherently unreliable on the sparse graphs typical of GraphRAG knowledge graphs, and proposed $k$-core–based heuristics that yield deterministic, size-bounded hierarchies in linear time. Across three datasets, three generator LLMs, and five LLM judges, our approach consistently improves comprehensiveness and diversity while reducing token usage compared to Leiden-based GraphRAG. Future work includes extending the framework to dynamic knowledge graphs and exploring its applicability beyond global sensemaking tasks.
In addition, alternative edge-selection mechanisms for token-budgeted retrieval and summarization, including PageRank-based methods and approaches inspired by \cite{hossain23}, remain an interesting direction for future research.

\section*{Acknowledgments}

Hossain and Sarıyüce were supported by NSF-2107089 and NSF-2236789 awards. This work used resources from the Center for Computational Research at the University at Buffalo~\cite{ccr} and the SUNY AI Platform on Google Cloud~\cite{sunyai}.

\bibliographystyle{ACM-Reference-Format}
\bibliography{reference}

%
%

\appendix

\section{Proof of Theorem~\ref{thm:main} and Experimental Validation}\label{app:proof}

\noindent {\bf Proof sketch.}

\noindent\textbf{Step 1: Low-degree nodes are weakly coupled.}
A node with degree $k_i \leq d$ participates in at most $d$ edges. Moving it between communities changes modularity by at most $O(d/m)$: the adjacency term shifts by at most $d/m$, and the null-model penalty shifts by $O(k_i \cdot K_r/(2m)^2)$ where $K_r$ is the total degree of the destination community. In the sparse regime where $m = \Theta(n)$, this is $O(1/n)$, so the node's community assignment barely affects the objective.

\medskip
\noindent\textbf{Step 2: Weakly coupled nodes can be independently reassigned.}
Reassigning one low-degree node perturbs another's sensitivity by only $O(1/n^2)$. By selecting a large independent set among the low-degree nodes, which is straightforward since these nodes have bounded degree, we obtain $\Theta(n)$ nodes whose reassignments are approximately independent. Each can be placed in at least two communities without meaningful modularity loss, yielding $2^{\Theta(n)}$ distinct near-optimal partitions.

\medskip
\noindent\textbf{Step 3: Sparsity makes this severe.}
When $\kbar < 3$, standard results on Poisson and power-law degree distributions guarantee that $n_{\mathrm{\le d}} = \Theta(n)$. This is the regime of knowledge graphs in GraphRAG. In contrast, for dense graphs with large $\kbar$, $n_{\mathrm{\le d}}/n \to 0$ and most nodes are strongly coupled to their communities, collapsing the degeneracy.\\

\noindent {\bf Full proof.}

\begin{proof}
We first remind modularity in its standard community-sum form:
\[
Q(\sigma) = \sum_{c} \left[\frac{e_c}{m} - \left(\frac{K_c}{2m}\right)^{\!2}\right],
\]
where $e_c$ counts edges internal to community $c$ and $K_c = \sum_{j \in c} k_j$ is its total degree. Moving node $i$ from community $s$ to community $r$ produces a new partition $\sigma^{i \to r}$. We write $d_i^{(c)}$ for the number of neighbors of $i$ in community $c$.
We also define the \emph{modularity sensitivity} of node $i$ under partition $\sigma$ is $\Delta_i(\sigma) = \max_{r \neq \sigma_i} |Q(\sigma) - Q(\sigma^{i \to r})|$, where $\sigma^{i \to r}$ moves $i$ to community $r$ with all other assignments fixed. Node $i$ is \emph{$\varepsilon$-weakly coupled} if $\Delta_i(\sigma) < \varepsilon$.

Next, we give two lemmas that will help with the proof

\begin{lemma}[Moving a low-degree node]\label{lem:one}
For any partition $\sigma$ and any node $i$ with degree $k_i \leq d$, moving $i$ to any other community changes modularity by at most
\[
\Delta_i(\sigma) \;\leq\; \frac{2k_i}{m} + \frac{k_i^2}{2m^2} \;=\; O\!\left(\frac{1}{n}\right).
\]
\end{lemma}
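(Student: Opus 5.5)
We will bound the change in modularity by direct computation in the community-sum form $Q(\sigma)=\sum_c [e_c/m - (K_c/2m)^2]$. Only two communities are affected when node $i$ moves from $s=\sigma_i$ to some $r\neq s$, so the task is to split $Q(\sigma^{i\to r})-Q(\sigma)$ into an adjacency part and a null-model part and bound each.

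\textbf{Adjacency term.} When $i$ leaves $s$, the internal edge count of $s$ drops by $d_i^{(s)}$. When $i$ joins $r$, the internal edge count of $r$ rises by $d_i^{(r)}$. The adjacency contribution is therefore
\[
\frac{d_i^{(r)}-d_i^{(s)}}{m},
\]
and its absolute value is at most $\max(d_i^{(r)},d_i^{(s)})/m\le k_i/m$. We can use the cruder bound $(d_i^{(r)}+d_i^{(s)})/m\le k_i/m$ as well. Either way this fits under the allotted $k_i/m$ share of the $2k_i/m$ budget.

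\textbf{Null-model term.} The degree sums change as $K_s\mapsto K_s-k_i$ and $K_r\mapsto K_r+k_i$. Expanding the squares gives the penalty change
\[
\frac{(K_s-k_i)^2-K_s^2+(K_r+k_i)^2-K_r^2}{4m^2}=\frac{2k_i(K_r-K_s)+2k_i^2}{4m^2}=\frac{k_i(K_r-K_s+k_i)}{2m^2}.
\]
We then bound $|K_r-K_s+k_i|$ using $K_s\ge k_i$ and $K_r\le 2m-K_s$. This gives $K_r - K_s + k_i \le 2m - 2K_s + k_i\le 2m$ and $K_r-K_s+k_i\ge -K_s+k_i\ge -2m$. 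So the null-model term is at most $k_i\cdot 2m/(2m^2)=k_i/m$.

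The exact split into $2k_i/m + k_i^2/(2m^2)$ is where bookkeeping matters most. I expect this to be the main obstacle: keeping the $k_i^2/(2m^2)$ term separate rather than absorbing it. The fix is to bound $|K_r-K_s|\le 2m$ separately, which gives $k_i/m$, and to keep the $+2k_i^2/(4m^2)=k_i^2/(2m^2)$ piece as its own term. Adding the adjacency bound $k_i/m$ then yields
\[
\Delta_i(\sigma)\le \frac{k_i}{m}+\frac{k_i}{m}+\frac{k_i^2}{2m^2}=\frac{2k_i}{m}+\frac{k_i^2}{2m^2}.
\]
Taking the maximum over $r$ preserves the bound since nothing depends on $r$.

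\textbf{Asymptotics.} Finally we use $k_i\le d$ and $m=\kbar n/2=\Theta(n)$, with $d$ a constant, to conclude that the bound is $O(1/n)$.
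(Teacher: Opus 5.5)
Your proof is correct and follows the paper's argument. Both split the change into the edge term $(d_i^{(r)}-d_i^{(s)})/m$, bounded by $k_i/m$, and the penalty term $k_i(K_r-K_s+k_i)/(2m^2)$, bounded via $|K_r-K_s|\le 2m$ with the $k_i^2/(2m^2)$ piece kept separate, and then conclude with $k_i\le d$ and $m=\Theta(n)$. The worry you flag as the main obstacle is not one: your intermediate bound $|K_r-K_s+k_i|\le 2m$ (from $K_s\ge k_i$) already gives the sharper $\Delta_i(\sigma)\le 2k_i/m$, which implies the stated bound.
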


\begin{proof}
Moving $i$ from community $s$ to $r$ affects only those two terms in the sum. There are two contributions:

\medskip
\noindent\textit{Edge term.} Community $s$ loses $d_i^{(s)}$ internal edges; $r$ gains $d_i^{(r)}$. Both are at most $k_i$, so
\[
|\Delta_{\mathrm{edge}}| = \frac{|d_i^{(r)} - d_i^{(s)}|}{m} \leq \frac{k_i}{m}.
\]

\noindent\textit{Degree-penalty term.} The total degree of $s$ drops by $k_i$ and that of $r$ rises by $k_i$. Expanding the squares and simplifying:
\[
|\Delta_{\mathrm{penalty}}| = \frac{|k_i[2(K_r - K_s) + 2k_i]|}{(2m)^2} \leq \frac{k_i}{m} + \frac{k_i^2}{2m^2},
\]
using $|K_r - K_s| \leq 2m$.

\medskip
\noindent Adding the two and substituting $k_i \leq d$ and $m = n\kbar/2 = \Theta(n)$ gives $\Delta_i \leq 2d/m + d^2/(2m^2)$. Both terms are $O(1/n)$.
\end{proof}

\begin{lemma}[Non-adjacent low-degree nodes]\label{lem:two}
Let $i,j$ be non-adjacent nodes with degrees $\leq d$. Reassigning $j$ changes node $i$'s sensitivity by at most
\[
|\Delta_i(\sigma) - \Delta_i(\sigma')| \;\leq\; \frac{2d^2}{(2m)^2} \;=\; O\!\left(\frac{1}{n^2}\right).
\]
\end{lemma}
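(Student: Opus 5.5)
The plan is to write the one-move modularity change in closed form, observe that reassigning $j$ touches only the degree-penalty part of that expression, and then pass the resulting uniform bound through the maximum defining $\Delta_i$.

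\textbf{Step 1: closed form for a single move.} Fix $\sigma$ with $\sigma_i = s$. Using the community-sum form of $Q$ and the same expansion as in Lemma~\ref{lem:one}, I would write, for each target $r \neq s$,
\[
g_r(\sigma) := Q(\sigma^{i\to r}) - Q(\sigma) = \frac{d_i^{(r)} - d_i^{(s)}}{m} - \frac{2k_i\,(K_r - K_s + k_i)}{(2m)^2},
\]
so that $\Delta_i(\sigma) = \max_{r \neq s} |g_r(\sigma)|$.

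\textbf{Step 2: effect of moving $j$.} Let $\sigma'$ be obtained from $\sigma$ by moving $j$ from community $a$ to community $b$.
\begin{itemize}
\item Since $i$ and $j$ are non-adjacent, every neighbour count $d_i^{(c)}$ is unchanged, so the edge term of every $g_r$ is unchanged.
\item The community of $i$ is still $s$. (If $j$ were moved into a brand-new community, I treat that as a target with $K = 0$.)
\item The only totals that change are $K_a$, which decreases by $k_j$, and $K_b$, which increases by $k_j$. Hence $K_r - K_s$ changes by at most $2k_j$, and only when $\{r,s\} = \{a,b\}$; otherwise it changes by at most $k_j$.
\end{itemize}
It follows that
\[
|g_r(\sigma) - g_r(\sigma')| \le \frac{2k_i \cdot c\,k_j}{(2m)^2}, \qquad c \in \{1,2\}.
\]

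\textbf{Step 3: pass through the maximum.} The map $x \mapsto |x|$ is $1$-Lipschitz, and a maximum of functions is $1$-Lipschitz in the sup norm. So $|\Delta_i(\sigma) - \Delta_i(\sigma')|$ is at most $\max_r |g_r(\sigma) - g_r(\sigma')|$. With $k_i, k_j \le d$ this gives a bound of the form $C d^2/(2m)^2$, which is $O(1/n^2)$ because $m = \Theta(n)$.

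\textbf{Main obstacle: bookkeeping.} There are two issues here.
\begin{itemize}
\item \emph{The constant.} When $j$ moves between $i$'s own community and a target community, the naive estimate gives $C = 4$ rather than the stated $2$. To reach $2d^2/(2m)^2$ I would first try to show that this doubled case cannot be the maximiser on both sides. Failing that, I would settle for $4d^2/(2m)^2$, which still yields the $O(1/n^2)$ claim that Step 2 of the theorem actually uses.
\item \emph{The set of admissible targets may change.} Moving $j$ can empty $a$ or create a new community $b$. I would handle this by identifying ``move $i$ to an empty community'' with the singleton option, which has $K_r = 0$. The set of available gains is then the same under $\sigma$ and $\sigma'$ up to this relabelling, and the Lipschitz argument of Step 3 applies unchanged.
\end{itemize}
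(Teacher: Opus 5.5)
Your argument is the paper's argument. Non-adjacency freezes every $d_i^{(c)}$, so only the penalty part of the one-move gain can change, and it changes only through $K_a$ and $K_b$. You are more careful than the paper, whose one-line proof leaves two things implicit: the passage through the maximum defining $\Delta_i$ (your $1$-Lipschitz step), and what the target set is when $j$ empties or creates a community (your empty-target convention, which makes the two index sets coincide).

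The open item in your plan cannot be closed, though. The case you flagged, $\{r,s\}=\{a,b\}$, can be the maximiser on both sides, so the constant $2$ in the statement is false. The paper obtains $k_i\cdot 2d/(2m)^2$ only by letting $K_r-K_s$ move by at most $k_j$. In effect it assumes that at most one of $K_r,K_s$ changes, which is exactly the case distinction you made.

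Suppose $j$ moves from $s=\sigma_i$ into the target $r$ attaining $\Delta_i(\sigma)$, and $g_r(\sigma)\le 0$ (as at any local optimum, in particular at the $\sigma^*$ used in the theorem). Then $g_r$ decreases by exactly $4k_ik_j/(2m)^2$. Every other $|g_{r'}|$, including the empty target, moves by at most $2k_ik_j/(2m)^2$. So $r$ remains the maximiser, and $\Delta_i(\sigma')-\Delta_i(\sigma)=4k_ik_j/(2m)^2$.

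For a concrete instance, take three disjoint edges $\{i,u\}$, $\{j,w\}$, $\{x,y\}$ (so $d=1$, $m=3$), with $\sigma=\{\{i,u,j,w\},\{x,y\}\}$, and let $\sigma'$ move $j$ into $\{x,y\}$. Your Step 1 formula gives $g_r=-5/18$ before and $-7/18$ after. The empty target gives $-1/6$ and $-2/9$, so it is not the maximiser under either convention. Hence $\Delta_i$ goes from $5/18$ to $7/18$, and the change $1/9=4d^2/(2m)^2$ is twice the claimed bound $2d^2/(2m)^2=1/18$.

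So drop the first option and take your fallback as the statement: $|\Delta_i(\sigma)-\Delta_i(\sigma')|\le 4k_ik_j/(2m)^2\le d^2/m^2$. Your Steps 1--3 already prove this, and the example shows it is tight. It still gives $O(1/n^2)$. In the proof of Theorem~\ref{thm:main} it only doubles the cross-term constant to $C_2=2d^2/((d+1)^2\kbar^2)$, which leaves the exponential lower bound intact.
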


\begin{proof}
Since $i$ and $j$ share no edge, reassigning $j$ does not change $i$'s neighbor counts $d_i^{(c)}$ in any community, so the edge term for $i$ is unaffected. The only effect is that $j$'s move shifts the total degrees $K_r, K_s$ by $\pm k_j \leq d$, which perturbs $i$'s penalty term by at most $k_i \cdot 2d/(2m)^2 \leq 2d^2/(2m)^2 = O(1/n^2)$.
\end{proof}

\noindent Now, we give the proof of Theorem~\ref{thm:main}. The argument has three parts.

\medskip
\noindent\textbf{1. Find a large independent set of weakly coupled nodes.}
By Lemma~\ref{lem:one}, every node with $k_i \leq d$ has sensitivity $O(1/n)$, so for $\varepsilon > d(2+\kbar)/(2m)$ it is $\varepsilon$-weakly coupled. Let $n_{\le d}$ denote the number of such nodes. The subgraph they induce has maximum degree $d$, so a greedy algorithm yields an independent set $S$ of size
\[
|S| \;\geq\; \frac{n_{\le d}}{d+1}.
\]

\medskip
\noindent\textbf{2. Every combination of reassignments stays near-optimal.}
Start from an optimal partition $\sigma^*$. Each node $i \in S$ can stay or move to an alternative community: two choices per node. For any subset $T \subseteq S$ of nodes that are reassigned, the total modularity loss is bounded by summing the individual sensitivities plus pairwise cross-terms (using Lemma~\ref{lem:two}):
\[
|Q^* - Q(\sigma_T)| \;\leq\; \underbrace{\sum_{i \in T} \Delta_i(\sigma^*)}_{\text{individual}} \;+\; \underbrace{\sum_{\{i,j\} \subseteq T} \frac{2d^2}{(2m)^2}}_{\text{cross-terms}}.
\]
Bounding each sum with $|T| \leq |S| \leq n_{\le d}/(d+1)$ and $m = n\kbar/2$:
\begin{align*}
\text{Individual sum} &\leq \frac{n_{\le d}}{d+1} \cdot \frac{d(2+\kbar)}{n\kbar} \leq \frac{d(2+\kbar)}{(d+1)\kbar} \eqqcolon C_1, \\[6pt]
\text{Cross-term sum} &\leq \frac{1}{2}\left(\frac{n_{\le d}}{d+1}\right)^{2} \cdot \frac{2d^2}{n^2\kbar^2} \leq \frac{d^2}{(d+1)^2\kbar^2} \eqqcolon C_2.
\end{align*}
Both $C_1$ and $C_2$ are constants (independent of $n$), so setting $\varepsilon' = C_1 + C_2$ keeps every such partition within $\varepsilon'$ of $Q^*$.

\medskip
\noindent\textbf{3. Count and conclude.}
Each of the $|S|$ nodes has two valid assignments, giving $2^{|S|}$ structurally distinct near-optimal partitions:
\[
\mathcal{D}(\varepsilon') \;\geq\; 2^{|S|} \;\geq\; 2^{\,n_{\le d}/(d+1)}.
\]
In the sparse regime ($\kbar < 3$), standard degree distributions guarantee $n_{\le d} = \Theta(n)$. For example, under a Poisson distribution with $\kbar = 2$ and $d = 2$, the fraction of low-degree nodes is $5e^{-2} \approx 0.68$; under a power law with exponent $\gamma = 2.5$, degree-1 nodes alone comprise ${\sim}74\%$ of the graph. Therefore $\mathcal{D}(\varepsilon') \geq 2^{\Theta(n)}$.
\end{proof}

\subsection{Empirical Analysis of Leiden Instability}
To complement Theorem~\ref{thm:main}, we conducted a multi-seed stability analysis (10 runs with different random seeds) of Leiden across 3 datasets (GPT-3.5 setup). 
Across runs, the cluster statistics show noticeable variability. For example, on the Podcast dataset, the number of clusters ranges from 2550 to 2584, while the largest cluster size ranges between 780 and 1069 nodes. The average node-level Adjusted Rand Index (ARI) is 0.94, indicating around 6\% disagreement in node-pair assignments.
While moderate, this level of variation is significant in GraphRAG as community assignments directly define retrieval units and contexts for the LLM: the Jaccard similarity of retrieved nodes across seeds is ~0.73, on average.
These findings provide direct empirical evidence of retrieval-level instability in Leiden clustering and further support Theorem~\ref{thm:main}.

\section{Splitting Oversized Components}

\subsection{SPLIT: Splitting Large Connected Components} \label{sec:split_cc}

When a connected component $R$ exceeds the maximum cluster size $M$, it is split into smaller clusters using the \textsc{Split} procedure. The algorithm (Algorithm~\ref{alg:split_cc}) grows each cluster greedily from a seed node with the highest degree (line~\ref{line:pick_seed}), adding neighboring nodes that maximize internal connectivity (line~\ref{line:add_best_node}) until the cluster reaches size $M$. Each completed cluster is added to the cluster set (line~\ref{line:update_cluster}) and removed from the remaining node set (line~\ref{line:remove_set}). The output is a set of size-constrained clusters that preserve internal connectivity within $C$.

\begin{algorithm}[!h]
\caption{\textsc{Split} (G, R, M)}
\label{alg:split_cc}
\begin{algorithmic}[1]
\REQUIRE Graph $G=(V,E)$, connected comp. $R$, max size $M$
\ENSURE List of split clusters $\mathcal{C}$

\STATE Initialize cluster set $\mathcal{C} \gets \emptyset$
\WHILE{$R \neq \emptyset$}
    \STATE Pick seed node $s \in R$ with highest degree \label{line:pick_seed}
    \STATE $S \gets \{s\}$, $frontier \gets N(s) \cap R$
    
    \WHILE{$|S| < M$ \textbf{and} $frontier \neq \emptyset$}
        \STATE Pick $v \in frontier$ maximizing $N(v) \cap S$ \label{line:add_best_node}
        \STATE $S \gets S \cup v$
        \STATE $frontier \gets N(v) \cap (R \setminus S)$
    \ENDWHILE
        \STATE $\mathcal{C} \gets \mathcal{C} \cup S $  \label{line:update_cluster}
    \STATE Remove $S$ from $R$ \label{line:remove_set}
\ENDWHILE

\RETURN $\mathcal{C}$
\end{algorithmic}
\end{algorithm}

\FloatBarrier

\subsection{SPLIT-2HOP: Splitting Large 2-Hop Connected Components}
\label{sec:split2hop}

When a 2-hop connected component $H$ exceeds the maximum cluster size $M$, it is partitioned into smaller clusters using the \textsc{Split-2HOP} procedure (Algorithm~\ref{alg:split2hop}). The algorithm first computes the anchor set as the union of neighbors of all 2-hop nodes and precomputes anchor neighborhoods for each node (lines~\ref{line:anchors}--\ref{line:Au}).  
Clusters are grown greedily by selecting a seed node with the largest anchor set (line~\ref{line:seed}) and initializing a cluster with this seed (line~\ref{line:initS}). Candidate nodes that share at least one anchor with the seed are added to the frontier (line~\ref{line:frontier}). The cluster is expanded by repeatedly selecting the frontier node with the highest overlap in anchor connectivity with the current cluster (lines~\ref{line:innerwhile}--\ref{line:addv}), until the cluster reaches size $M$ or no eligible nodes remain.  
After cluster growth, anchor nodes connected to at least two nodes in the cluster are selected and included (line~\ref{line:selectanchors}), and the resulting cluster is added to the output set (line~\ref{line:addcluster}). The assigned nodes are then removed from the remaining 2-hop groups set (line~\ref{line:removeS}). This process repeats until all nodes in the 2-hop groups are assigned.

\begin{algorithm}[!htb]
\caption{\textsc{Split-2hop} (G, H, M)}
\label{alg:split2hop}
\begin{algorithmic}[1]
\REQUIRE Graph $G=(V,E)$, 2-hop groups $H$, max size $M$
\ENSURE List of split clusters $\mathcal{C}$

\STATE Initialize cluster set $\mathcal{C} \gets \emptyset$ \label{line:initC}

\STATE Compute anchors $A \gets \bigcup_{u \in H} N(u)$ \label{line:anchors}
\STATE Precompute $A_u \gets N(u) \cap A$ for $u \in H$ \label{line:Au}

\WHILE{$H \neq \emptyset$} \label{line:outerwhile}
    \STATE Pick $seed \in H$ with max $|A_{seed}|$ \label{line:seed}
    \STATE $S \gets \{seed\}$ \label{line:initS}
    \STATE $frontier \gets \{v \in H\setminus\{seed\} \mid A_v \cap A_{seed} \neq \emptyset\}$ \label{line:frontier}

    \WHILE{$|S| < M$ and $frontier \neq \emptyset$} \label{line:innerwhile}
        \STATE Pick $v \in frontier$ maximizing $\sum_{u \in S} |A_v \cap A_u|$ \label{line:pickv}
        \STATE $S \gets S \cup \{v\}$\;
        $frontier \gets (frontier \cup \{u \in H \mid A_u \cap A_v \neq \emptyset\}) \setminus S$ \label{line:addv}
    \ENDWHILE
    
\tcp{Select anchors linked to $\ge 2$ nodes in $S$} 
    \STATE $A' \gets \{v \in A s.t. N(v) \cap S \ge 2\}$ \label{line:selectanchors}
    \STATE $\mathcal{C} \gets \mathcal{C} \cup \{S \cup A'\}$ \label{line:addcluster}
    \STATE $H \gets H \setminus S$ \label{line:removeS}
\ENDWHILE

\RETURN $\mathcal{C}$
\end{algorithmic}
\end{algorithm}

\section{Alternative Community Partition Methods} \label{sec:altcomdet}
To our knowledge, prior GraphRAG studies do not evaluate alternative community detection methods, making Leiden the de facto comparison baseline. We also explored hierarchical block structures \cite{peixoto14} and non-modularity-based approaches such as $k$-truss, but found them consistently weaker than Leiden and therefore omitted them from the experimental results (additional discussion on $k$-truss is provided in Section~\ref{sec:proposed_heuristics}). For example, on the Semiconductor dataset (GPT-3.5 setup), Leiden outperforms $k$-truss in $\sim$82--85\% of comparisons, and $k$-core outperforms $k$-truss in $\sim$90--93\% of comparisons.

\section{Evaluation Criteria Details} 
\subsection{Question Generation} \label{sec:question_generation}
Following methodology described in \cite{edge24}, we generated 125 sensemaking questions using \five.
Consistent with their approach, we first prompted \five to create personas of hypothetical users for each corpus and then generated 5 tasks per user.  
For each user-task pair, we use \five to generate high-level questions that require understanding of the entire corpus without relying on low-level fact retrieval.  
This process ensures that the resulting questions assess comprehensive, corpus-wide reasoning, aligned with prior work.

\subsection{Evaluation Approach} \label{sec:evaluation_details}
We evaluate models using a head-to-head framework, where LLMs compare pairs of answers to identify a winner, loser, or tie. To enhance reliability, multiple evaluators are used, and repeated assessments are performed. The final outcome is determined via majority voting, making this approach suitable for global sensemaking tasks without gold-standard references.
 Each comparison proceeds as follows:
\begin{enumerate}
\item Generate answers to the same query using two approaches.
\item Randomize answer order and present the pair to the evaluator.
\item Each of five independent LLM evaluators (\five, Gemini 3 Pro Preview, Gemini 2.5 Pro, Qwen3 Next 80B, and DeepSeek v3.2) assesses the same answer pair three times.
\item Apply majority voting within each evaluator to determine their decision.
\item Apply a second majority vote across evaluators to determine the final judgment.
\end{enumerate}
This procedure ensures robust, repeatable head-to-head comparisons across methods.


\section{Statistical Analysis (p-values)} \label{sec:p_values_appendix}
To assess the significance of observed differences, we follow a procedure similar to Edge et al.~\cite{edge24}. 
For each dataset, evaluator, and metric, we assign scores for each head-to-head comparison: the winning method receives a score of 100, the losing method receives 0, and in the event of a tie, both methods receive 50.  These scores are then averaged over all evaluators, repeated runs, and questions, as detailed in Section~\ref{sec:evaluation_method}.
Following~\cite{edge24}, we use non-parametric Wilcoxon signed-rank tests to evaluate pairwise performance differences between methods. 
Holm-Bonferroni adjustment is applied to correct for multiple comparisons.

\begin{table}[!t]
\centering
\caption{Wilcoxon signed-rank test p-values for comparisons across datasets and baselines on Diversity (\threepointfive). Significant values are highlighted in \textbf{bold} for $p<0.005$.}
\label{tab:p_values_diversity}
\resizebox{.47\textwidth}{!}{%
\begin{tabular}{l c c | c c | c c}
\hline
\textbf{Condition} & \multicolumn{2}{c|}{\podcast} & \multicolumn{2}{c|}{\news} & \multicolumn{2}{c}{\semi} \\
 & C2 & C3 & C2 & C3 & C2 & C3 \\
\hline
\KHC L1   
& 0.073 & \textbf{<0.001} 
& 0.008 & \textbf{<0.001} 
& 0.083 & 0.083 \\

\KHC LF   
& \textbf{<0.001} & \textbf{<0.001} 
& 0.071 & \textbf{<0.001} 
& 0.008 & 0.007 \\

\MHC L1   
& \textbf{<0.001} & \textbf{<0.001} 
& \textbf{<0.001} & 0.007 
& \textbf{<0.001} & 0.982 \\

\MHC LF   
& \textbf{<0.001} & \textbf{<0.001} 
& \textbf{<0.001} & \textbf{<0.001} 
& 0.083 & \textbf{<0.001} \\

\MRC L1   
& 0.982 & \textbf{<0.001} 
& \textbf{<0.001} & 0.071 
& \textbf{<0.001} & \textbf{<0.001} \\

\MRC LF   
& 0.083 & 0.393 
& 0.083 & \textbf{<0.001} 
& \textbf{<0.001} & \textbf{<0.001} \\

\hline
\end{tabular}%
}
\end{table}


\subsection{Head-to-Head Comparison on Leiden C2 and Different \MHC Levels}

Table~\ref{tab:gpt35_msft_mhc_h2h} shows the head-to-head win rates (\%) for \threepointfive on \ms data, comparing the Leiden C2 baseline against various \MHC level variants. The two subtables report metrics for (a) Comprehensiveness and (b) Diversity.  

Here, the \MHC variants correspond to different k-core levels within the graph hierarchy: LF represents the leaf-level communities, LF-2 corresponds to two levels above the leaf, LF-4 to four levels above the leaf, and LF-6 to six levels above the leaf. We observe that higher k-core levels consistently achieve better win rates against C2, particularly in Comprehensiveness, indicating that the more central nodes capture global context more effectively. 
This justifies focusing on leaf-level (LF) and the level immediately above leaf (L1) for all of our evaluations, as they provide better performance.

\begin{table}[h!]
\centering
\caption{\threepointfive Head-to-Head Win Rates (\%) using Leiden C2 and \MHC Level Variants on \ms data}
\label{tab:gpt35_msft_mhc_h2h}
\setlength{\tabcolsep}{2pt} 

\begin{subtable}[t]{\columnwidth}
\centering
\caption{Comprehensiveness}
\begin{tabular}{|l|c|c|c|c|c|}
\hline
\textbf{} & Leiden C2 & \MHC LF-6 & \MHC LF-4 & \MHC LF-2 & \MHC LF \\
\hline
Leiden C2 & \heatmap{50} & \heatmap{56} & \heatmap{51} & \heatmap{48} & \heatmap{39} \\
\MHC LF-6 & \heatmap{44} & \heatmap{50} & \heatmap{45} & \heatmap{43} & \heatmap{33} \\
\MHC LF-4 & \heatmap{49} & \heatmap{55} & \heatmap{50} & \heatmap{46} & \heatmap{40} \\
\MHC LF-2 & \heatmap{52} & \heatmap{57} & \heatmap{54} & \heatmap{50} & \heatmap{47} \\
\MHC LF   & \heatmap{61} & \heatmap{67} & \heatmap{60} & \heatmap{53} & \heatmap{50} \\
\hline
\end{tabular}
\end{subtable}

\vspace{0.5em} 

\begin{subtable}[t]{\columnwidth}
\centering
\caption{Diversity}
\begin{tabular}{|l|c|c|c|c|c|}
\hline
\textbf{} & Leiden C2 & \MHC LF-6 & \MHC LF-4 & \MHC LF-2 & \MHC LF \\
\hline
Leiden C2 & \heatmap{50} & \heatmap{62} & \heatmap{54} & \heatmap{46} & \heatmap{39} \\
\MHC LF-6 & \heatmap{38} & \heatmap{50} & \heatmap{42} & \heatmap{38} & \heatmap{36} \\
\MHC LF-4 & \heatmap{46} & \heatmap{58} & \heatmap{50} & \heatmap{49} & \heatmap{41} \\
\MHC LF-2 & \heatmap{54} & \heatmap{62} & \heatmap{51} & \heatmap{50} & \heatmap{51} \\
\MHC LF   & \heatmap{61} & \heatmap{64} & \heatmap{59} & \heatmap{49} & \heatmap{50} \\
\hline
\end{tabular}
\end{subtable}

\end{table}

\section{Additional Results}

Here, we present additional results from \threepointfive, \four, and \five, including the results from Leiden C0, C1, and other metrics such as Empowerment and Directness.

\subsection{\four Results on Post-cutoff Data}
Table~\ref{tab:gpt4} reports head-to-head win rates for comprehensiveness and diversity under the same experimental conditions as in \threepointfive.
Across datasets and configurations, our heuristics still achieve higher win rates than Leiden C2/C3 in a majority of comparisons, averaging approximately \textbf{45--55\%}, particularly on \semi.
Among our methods, {\bf \MHC LF still achieves the strongest overall performance}, with average win rates of approximately \textbf{50--52\%} across datasets and metrics, followed by \KHC LF and \MRC LF at \textbf{48--50\%}. 
Dataset-wise, gains are most pronounced on \semi, where \MHC LF reaches up to \textbf{64\%} wins for diversity (C2) and maintains \textbf{58\%} against C3, while comprehensiveness peaks at \textbf{52\%}. 
For \podcast and \news, win rates remain closer to parity, typically fluctuating between \textbf{45--50\%} across heuristics and Leiden levels.

Comparing Leiden resolutions, differences between C2 and C3 are smaller than for \threepointfive, with average win-rate gaps generally within \textbf{2--4\%}. 
Averaged across datasets and metrics, leaf-level (LF) variants continue to outperform L1, but by a narrower margin of \textbf{2--5} percentage points.
Overall, \four narrows the gap between $k$-core heuristics and Leiden GraphRAG, with \textbf{\MHC LF} remaining the most robust, especially on \semi, and generally matching or slightly surpassing Leiden baselines despite more ties.

Note that performance on the \podcast is less reliable due to the limited number of post-cutoff documents. With only 13 documents available, the resulting graphs are sparse, and our heuristics often show minimal improvement, performing similarly to C2/C3 configurations.

\begin{table}[!ht]
\centering
\caption{\five full: Head-to-head win rates (\%), for comprehensiveness and diversity metrics. C3 is the Leiden community level from Edge et al.~\cite{edge24}. LF indicates leaf-level communities, and L1 indicates the level immediately above the leaf.}
\label{tab:gpt5_full_c3}
\resizebox{.47\textwidth}{!}{%
\begin{tabular}{|r|ccc|ccc|ccc|}
\hline
\five
& \multicolumn{3}{c|}{\podcast} 
& \multicolumn{3}{c|}{\news} 
& \multicolumn{3}{c|}{\ms} \\
\cline{2-10}
results for
& \multicolumn{3}{c|}{C3}
& \multicolumn{3}{c|}{C3}
& \multicolumn{3}{c|}{C3} \\
\cline{2-10}
{\bf Comprehensiveness}
& Win & Loss & Tie
& Win & Loss & Tie
& Win & Loss & Tie \\
\hline
\KHC L1   
& \heatmap{41} & \heatmap{41} & \heatmap{18}
& \heatmap{52} & \heatmap{39} & \heatmap{9}
& \heatmap{38} & \heatmap{43} & \heatmap{19} \\

\KHC LF   
& \heatmap{43} & \heatmap{43} & \heatmap{14}
& \heatmap{47} & \heatmap{45} & \heatmap{8}
& \heatmap{42} & \heatmap{48} & \heatmap{10} \\

\MHC L1  
& \heatmap{48} & \heatmap{48} & \heatmap{4}
& \heatmap{49} & \heatmap{37} & \heatmap{14}
& \heatmap{48} & \heatmap{42} & \heatmap{10} \\

\MHC LF  
& \heatmap{50} & \heatmap{47} & \heatmap{3}
& \heatmap{47} & \heatmap{35} & \heatmap{18}
& \heatmap{47} & \heatmap{48} & \heatmap{5} \\

\MRC L1 
& \heatmap{45} & \heatmap{45} & \heatmap{10}
& \heatmap{32} & \heatmap{48} & \heatmap{20}
& \heatmap{45} & \heatmap{40} & \heatmap{15} \\

\MRC LF 
& \heatmap{50} & \heatmap{37} & \heatmap{13}
& \heatmap{36} & \heatmap{46} & \heatmap{18}
& \heatmap{45} & \heatmap{40} & \heatmap{15} \\
\hline

{\bf Diversity}
& Win & Loss & Tie
& Win & Loss & Tie
& Win & Loss & Tie \\
\hline
\KHC L1   
& \heatmap{44} & \heatmap{44} & \heatmap{12}
& \heatmap{47} & \heatmap{38} & \heatmap{15}
& \heatmap{33} & \heatmap{53} & \heatmap{14} \\

\KHC LF   
& \heatmap{48} & \heatmap{45} & \heatmap{7}
& \heatmap{37} & \heatmap{34} & \heatmap{29}
& \heatmap{48} & \heatmap{44} & \heatmap{8} \\

\MHC L1  
& \heatmap{40} & \heatmap{42} & \heatmap{18}
& \heatmap{42} & \heatmap{38} & \heatmap{20}
& \heatmap{45} & \heatmap{48} & \heatmap{7} \\

\MHC LF  
& \heatmap{48} & \heatmap{41} & \heatmap{11}
& \heatmap{32} & \heatmap{33} & \heatmap{35}
& \heatmap{52} & \heatmap{40} & \heatmap{8} \\

\MRC L1 
& \heatmap{47} & \heatmap{46} & \heatmap{7}
& \heatmap{45} & \heatmap{29} & \heatmap{24}
& \heatmap{53} & \heatmap{35} & \heatmap{12} \\

\MRC LF 
& \heatmap{56} & \heatmap{31} & \heatmap{13}
& \heatmap{34} & \heatmap{30} & \heatmap{36}
& \heatmap{54} & \heatmap{38} & \heatmap{8} \\
\hline

\end{tabular}
}
\end{table}

\begin{table*}[!htb]
\centering
\caption{\threepointfive post-cutoff: Head-to-head win rates (\%), for comprehensiveness and diversity metrics. C0 and C1 are Leiden community levels from Edge et al.~\cite{edge24}. LF indicates leaf-level communities, and L1 indicates the level immediately above the leaf.}
\label{tab:gpt3_c0_c1}
\resizebox{\textwidth}{!}{%
\begin{tabular}{|r|ccc|ccc|ccc|ccc|ccc|ccc|}
\hline
\threepointfive
& \multicolumn{6}{c|}{\podcast} 
& \multicolumn{6}{c|}{\news} 
& \multicolumn{6}{c|}{\semi} \\
\cline{2-19}
results for
& \multicolumn{3}{c|}{C0} & \multicolumn{3}{c|}{C1}
& \multicolumn{3}{c|}{C0} & \multicolumn{3}{c|}{C1}
& \multicolumn{3}{c|}{C0} & \multicolumn{3}{c|}{C1} \\
\cline{2-19}
{\bf Comprehensiveness}
& Win & Loss & Tie & Win & Loss & Tie
& Win & Loss & Tie & Win & Loss & Tie
& Win & Loss & Tie & Win & Loss & Tie \\
\hline
\KHC L1   
& \heatmap{50} & \heatmap{42} & \heatmap{8} & \heatmap{68} & \heatmap{24} & \heatmap{8}
& \heatmap{56} & \heatmap{32} & \heatmap{12} & \heatmap{48} & \heatmap{42} & \heatmap{10}
& \heatmap{66} & \heatmap{30} & \heatmap{4} & \heatmap{72} & \heatmap{26} & \heatmap{2} \\

\KHC LF   
& \heatmap{58} & \heatmap{42} & \heatmap{0} & \heatmap{52} & \heatmap{44} & \heatmap{4}
& \heatmap{40} & \heatmap{50} & \heatmap{10} & \heatmap{42} & \heatmap{44} & \heatmap{14}
& \heatmap{68} & \heatmap{32} & \heatmap{0} & \heatmap{52} & \heatmap{46} & \heatmap{2} \\

\MHC L1  
& \heatmap{48} & \heatmap{46} & \heatmap{6} & \heatmap{66} & \heatmap{28} & \heatmap{6}
& \heatmap{50} & \heatmap{40} & \heatmap{10} & \heatmap{50} & \heatmap{36} & \heatmap{14}
& \heatmap{54} & \heatmap{46} & \heatmap{0} & \heatmap{62} & \heatmap{28} & \heatmap{10} \\

\MHC LF
& \heatmap{58} & \heatmap{40} & \heatmap{2} & \heatmap{68} & \heatmap{28} & \heatmap{4}
& \heatmap{44} & \heatmap{48} & \heatmap{8} & \heatmap{44} & \heatmap{49} & \heatmap{7}
& \heatmap{60} & \heatmap{38} & \heatmap{2} & \heatmap{72} & \heatmap{24} & \heatmap{4} \\

\MRC L1 
& \heatmap{58} & \heatmap{36} & \heatmap{6} & \heatmap{48} & \heatmap{48} & \heatmap{4}
& \heatmap{50} & \heatmap{38} & \heatmap{12} & \heatmap{46} & \heatmap{44} & \heatmap{10}
& \heatmap{54} & \heatmap{40} & \heatmap{6} & \heatmap{64} & \heatmap{32} & \heatmap{4} \\

\MRC LF 
& \heatmap{64} & \heatmap{36} & \heatmap{0} & \heatmap{57} & \heatmap{39} & \heatmap{4}
& \heatmap{56} & \heatmap{34} & \heatmap{10} & \heatmap{61} & \heatmap{35} & \heatmap{4}
& \heatmap{48} & \heatmap{42} & \heatmap{10} & \heatmap{55} & \heatmap{43} & \heatmap{2} \\
\hline
{\bf Diversity}
& Win & Loss & Tie & Win & Loss & Tie
& Win & Loss & Tie & Win & Loss & Tie
& Win & Loss & Tie & Win & Loss & Tie \\
\hline
\KHC L1   
& \heatmap{68} & \heatmap{28} & \heatmap{4} & \heatmap{60} & \heatmap{36} & \heatmap{4}
& \heatmap{60} & \heatmap{38} & \heatmap{2} & \heatmap{56} & \heatmap{38} & \heatmap{6}
& \heatmap{62} & \heatmap{38} & \heatmap{0} & \heatmap{66} & \heatmap{34} & \heatmap{0} \\

\KHC LF   
& \heatmap{52} & \heatmap{46} & \heatmap{2} & \heatmap{56} & \heatmap{40} & \heatmap{4}
& \heatmap{48} & \heatmap{48} & \heatmap{4} & \heatmap{50} & \heatmap{44} & \heatmap{6}
& \heatmap{62} & \heatmap{36} & \heatmap{2} & \heatmap{60} & \heatmap{34} & \heatmap{6} \\

\MHC L1  
& \heatmap{42} & \heatmap{54} & \heatmap{4} & \heatmap{70} & \heatmap{28} & \heatmap{2}
& \heatmap{56} & \heatmap{40} & \heatmap{4} & \heatmap{60} & \heatmap{34} & \heatmap{6}
& \heatmap{64} & \heatmap{36} & \heatmap{0} & \heatmap{70} & \heatmap{30} & \heatmap{0} \\

\MHC LF
& \heatmap{52} & \heatmap{48} & \heatmap{0} & \heatmap{72} & \heatmap{27} & \heatmap{1}
& \heatmap{60} & \heatmap{38} & \heatmap{2} & \heatmap{67} & \heatmap{32} & \heatmap{1}
& \heatmap{62} & \heatmap{38} & \heatmap{0} & \heatmap{52} & \heatmap{38} & \heatmap{10} \\

\MRC L1 
& \heatmap{66} & \heatmap{34} & \heatmap{0} & \heatmap{60} & \heatmap{40} & \heatmap{0}
& \heatmap{50} & \heatmap{48} & \heatmap{2} & \heatmap{42} & \heatmap{52} & \heatmap{6}
& \heatmap{52} & \heatmap{44} & \heatmap{4} & \heatmap{52} & \heatmap{46} & \heatmap{2} \\

\MRC LF 
& \heatmap{60} & \heatmap{40} & \heatmap{0} & \heatmap{65} & \heatmap{35} & \heatmap{0}
& \heatmap{64} & \heatmap{36} & \heatmap{0} & \heatmap{64} & \heatmap{36} & \heatmap{0}
& \heatmap{60} & \heatmap{40} & \heatmap{0} & \heatmap{56} & \heatmap{44} & \heatmap{0} \\
\hline
\end{tabular}
}
\end{table*}

\begin{table*}[!h]
\centering
\caption{\four post-cutoff : Head-to-head win rates (\%), for comprehensiveness and diversity metrics. C2 and C3 are Leiden community levels from Edge et al.~\cite{edge24}. LF indicates leaf-level communities, and L1 indicates the level immediately above the leaf.
}
\label{tab:gpt4}
\resizebox{\textwidth}{!}{%
\begin{tabular}{|r|ccc|ccc|ccc|ccc|ccc|ccc|}
\hline
\four
& \multicolumn{6}{c|}{\podcast} 
& \multicolumn{6}{c|}{\news} 
& \multicolumn{6}{c|}{\semi} \\
\cline{2-19}
results for
& \multicolumn{3}{c|}{C2} & \multicolumn{3}{c|}{C3}
& \multicolumn{3}{c|}{C2} & \multicolumn{3}{c|}{C3}
& \multicolumn{3}{c|}{C2} & \multicolumn{3}{c|}{C3} \\
\cline{2-19}
{\bf Comprehensiveness}
& Win & Loss & Tie & Win & Loss & Tie
& Win & Loss & Tie & Win & Loss & Tie
& Win & Loss & Tie & Win & Loss & Tie \\
\hline
\KHC L1   
& \heatmap{50} & \heatmap{50} & \heatmap{0} & \heatmap{50} & \heatmap{50} & \heatmap{0}
& \heatmap{38} & \heatmap{42} & \heatmap{20} & \heatmap{40} & \heatmap{42} & \heatmap{18}
& \heatmap{46} & \heatmap{46} & \heatmap{8} & \heatmap{39} & \heatmap{36} & \heatmap{25} \\

\KHC LF   
& \heatmap{50} & \heatmap{48} & \heatmap{2} & \heatmap{46} & \heatmap{48} & \heatmap{6}
& \heatmap{40} & \heatmap{40} & \heatmap{20} & \heatmap{43} & \heatmap{40} & \heatmap{17}
& \heatmap{52} & \heatmap{41} & \heatmap{7} & \heatmap{49} & \heatmap{32} & \heatmap{19} \\

\MHC L1  
& \heatmap{48} & \heatmap{46} & \heatmap{6} & \heatmap{48} & \heatmap{44} & \heatmap{8}
& \heatmap{42} & \heatmap{42} & \heatmap{16} & \heatmap{44} & \heatmap{44} & \heatmap{12}
& \heatmap{44} & \heatmap{40} & \heatmap{16} & \heatmap{51} & \heatmap{30} & \heatmap{19} \\

\MHC LF  
& \heatmap{50} & \heatmap{48} & \heatmap{2} & \heatmap{50} & \heatmap{46} & \heatmap{4}
& \heatmap{44} & \heatmap{40} & \heatmap{16} & \heatmap{46} & \heatmap{42} & \heatmap{12}

& \heatmap{52} & \heatmap{38} & \heatmap{10} & \heatmap{50} & \heatmap{24} & \heatmap{26} \\

\MRC L1 
& \heatmap{48} & \heatmap{44} & \heatmap{8} & \heatmap{44} & \heatmap{42} & \heatmap{14}
& \heatmap{38} & \heatmap{36} & \heatmap{26} & \heatmap{38} & \heatmap{32} & \heatmap{30}
& \heatmap{36} & \heatmap{42} & \heatmap{22} & \heatmap{46} & \heatmap{44} & \heatmap{10} \\

\MRC LF 
& \heatmap{48} & \heatmap{46} & \heatmap{6} & \heatmap{46} & \heatmap{48} & \heatmap{6}
& \heatmap{40} & \heatmap{35} & \heatmap{25} & \heatmap{40} & \heatmap{30} & \heatmap{30}
& \heatmap{48} & \heatmap{42} & \heatmap{10} & \heatmap{44} & \heatmap{44} & \heatmap{12} \\
\hline
{\bf Diversity}
& Win & Loss & Tie & Win & Loss & Tie
& Win & Loss & Tie & Win & Loss & Tie
& Win & Loss & Tie & Win & Loss & Tie \\
\hline
\KHC L1   
& \heatmap{50} & \heatmap{48} & \heatmap{2} & \heatmap{46} & \heatmap{48} & \heatmap{6}
& \heatmap{47} & \heatmap{40} & \heatmap{13} & \heatmap{52} & \heatmap{39} & \heatmap{9}
& \heatmap{38} & \heatmap{54} & \heatmap{8} & \heatmap{45} & \heatmap{48} & \heatmap{7} \\

\KHC LF   
& \heatmap{50} & \heatmap{50} & \heatmap{0} & \heatmap{48} & \heatmap{50} & \heatmap{2}
& \heatmap{50} & \heatmap{38} & \heatmap{12} & \heatmap{55} & \heatmap{37} & \heatmap{8}
& \heatmap{47} & \heatmap{49} & \heatmap{4} & \heatmap{52} & \heatmap{36} & \heatmap{12} \\

\MHC L1  
& \heatmap{50} & \heatmap{50} & \heatmap{0} & \heatmap{50} & \heatmap{50} & \heatmap{0}
& \heatmap{42} & \heatmap{48} & \heatmap{10} & \heatmap{44} & \heatmap{46} & \heatmap{10}
& \heatmap{50} & \heatmap{38} & \heatmap{12} & \heatmap{49} & \heatmap{36} & \heatmap{15} \\

\MHC LF  
& \heatmap{49} & \heatmap{46} & \heatmap{5} & \heatmap{46} & \heatmap{48} & \heatmap{6}
& \heatmap{44} & \heatmap{46} & \heatmap{10} & \heatmap{46} & \heatmap{44} & \heatmap{10}
& \heatmap{64} & \heatmap{32} & \heatmap{4} & \heatmap{58} & \heatmap{34} & \heatmap{8} \\

\MRC L1 
& \heatmap{48} & \heatmap{48} & \heatmap{4} & \heatmap{48} & \heatmap{44} & \heatmap{8}
& \heatmap{43} & \heatmap{34} & \heatmap{23} & \heatmap{46} & \heatmap{38} & \heatmap{16}
& \heatmap{46} & \heatmap{46} & \heatmap{8} & \heatmap{45} & \heatmap{45} & \heatmap{10} \\

\MRC LF 
& \heatmap{49} & \heatmap{50} & \heatmap{1} & \heatmap{48} & \heatmap{52} & \heatmap{0}
& \heatmap{45} & \heatmap{32} & \heatmap{23} & \heatmap{48} & \heatmap{36} & \heatmap{16}
& \heatmap{54} & \heatmap{44} & \heatmap{2} & \heatmap{54} & \heatmap{45} & \heatmap{1} \\
\hline
\end{tabular}
}
\end{table*}

\begin{table*}[!h]
\centering
\caption{\threepointfive post-cutoff: Head-to-head win rates (\%), for empowerment and directness metrics. C2/C3 are Leiden community levels, LF is leaf-level communities, and L1 is the level immediately above the leaf}
\label{tab:gpt35_empowerment_directness}
\resizebox{\textwidth}{!}{%
\begin{tabular}{|r|ccc|ccc|ccc|ccc|ccc|ccc|}
\hline
\threepointfive
& \multicolumn{6}{c|}{\podcast} 
& \multicolumn{6}{c|}{\news} 
& \multicolumn{6}{c|}{\semi} \\
\cline{2-19}
results for
& \multicolumn{3}{c|}{C2 } & \multicolumn{3}{c|}{C3 }
& \multicolumn{3}{c|}{C2 } & \multicolumn{3}{c|}{C3 }
& \multicolumn{3}{c|}{C2 } & \multicolumn{3}{c|}{C3 } \\
\cline{2-19}
{\bf Comprehensiveness}
& Win & Loss & Tie & Win & Loss & Tie
& Win & Loss & Tie & Win & Loss & Tie
& Win & Loss & Tie & Win & Loss & Tie \\
\hline
\KHC L1   
& \heatmap{48} & \heatmap{50} & \heatmap{2} & \heatmap{52} & \heatmap{45} & \heatmap{3}
& \heatmap{46} & \heatmap{52} & \heatmap{2} & \heatmap{63} & \heatmap{37} & \heatmap{0}
& \heatmap{52} & \heatmap{48} & \heatmap{0} & \heatmap{47} & \heatmap{52} & \heatmap{1} \\

\KHC LF   
& \heatmap{52} & \heatmap{43} & \heatmap{5} & \heatmap{54} & \heatmap{43} & \heatmap{3}
& \heatmap{46} & \heatmap{50} & \heatmap{4} & \heatmap{57} & \heatmap{40} & \heatmap{3}
& \heatmap{50} & \heatmap{50} & \heatmap{0} & \heatmap{50} & \heatmap{50} & \heatmap{0} \\

\MHC L1  
& \heatmap{54} & \heatmap{44} & \heatmap{2} & \heatmap{59} & \heatmap{39} & \heatmap{2}
& \heatmap{63} & \heatmap{35} & \heatmap{2} & \heatmap{55} & \heatmap{45} & \heatmap{0}
& \heatmap{40} & \heatmap{59} & \heatmap{1} & \heatmap{53} & \heatmap{43} & \heatmap{4} \\

\MHC LF  
& \heatmap{50} & \heatmap{46} & \heatmap{4} & \heatmap{54} & \heatmap{46} & \heatmap{0}
& \heatmap{46} & \heatmap{51} & \heatmap{3} & \heatmap{46} & \heatmap{54} & \heatmap{0}
& \heatmap{61} & \heatmap{38} & \heatmap{1} & \heatmap{52} & \heatmap{44} & \heatmap{4} \\

\MRC L1 
& \heatmap{39} & \heatmap{54} & \heatmap{7} & \heatmap{40} & \heatmap{55} & \heatmap{5}
& \heatmap{38} & \heatmap{60} & \heatmap{2} & \heatmap{38} & \heatmap{60} & \heatmap{2}
& \heatmap{51} & \heatmap{48} & \heatmap{2} & \heatmap{57} & \heatmap{37} & \heatmap{6} \\

\MRC LF 
& \heatmap{36} & \heatmap{56} & \heatmap{8} & \heatmap{47} & \heatmap{50} & \heatmap{3}
& \heatmap{50} & \heatmap{40} & \heatmap{10} & \heatmap{65} & \heatmap{30} & \heatmap{5}
& \heatmap{53} & \heatmap{46} & \heatmap{1} & \heatmap{59} & \heatmap{36} & \heatmap{5} \\
\hline
{\bf Directness}
& Win & Loss & Tie & Win & Loss & Tie
& Win & Loss & Tie & Win & Loss & Tie
& Win & Loss & Tie & Win & Loss & Tie \\
\hline
\KHC L1   
& \heatmap{62} & \heatmap{36} & \heatmap{2} & \heatmap{40} & \heatmap{50} & \heatmap{10}
& \heatmap{37} & \heatmap{63} & \heatmap{0} & \heatmap{45} & \heatmap{45} & \heatmap{10}
& \heatmap{35} & \heatmap{65} & \heatmap{0} & \heatmap{56} & \heatmap{43} & \heatmap{1} \\

\KHC LF   
& \heatmap{42} & \heatmap{55} & \heatmap{3} & \heatmap{35} & \heatmap{55} & \heatmap{10}
& \heatmap{40} & \heatmap{57} & \heatmap{3} & \heatmap{48} & \heatmap{42} & \heatmap{10}
& \heatmap{35} & \heatmap{65} & \heatmap{0} & \heatmap{58} & \heatmap{41} & \heatmap{1} \\

\MHC L1  
& \heatmap{54} & \heatmap{44} & \heatmap{2} & \heatmap{64} & \heatmap{28} & \heatmap{8}
& \heatmap{56} & \heatmap{44} & \heatmap{0} & \heatmap{56} & \heatmap{31} & \heatmap{13}
& \heatmap{63} & \heatmap{36} & \heatmap{1} & \heatmap{53} & \heatmap{43} & \heatmap{4} \\

\MHC LF  
& \heatmap{52} & \heatmap{44} & \heatmap{4} & \heatmap{62} & \heatmap{31} & \heatmap{7}
& \heatmap{46} & \heatmap{54} & \heatmap{0} & \heatmap{58} & \heatmap{31} & \heatmap{11}
& \heatmap{61} & \heatmap{38} & \heatmap{1} & \heatmap{53} & \heatmap{44} & \heatmap{3} \\

\MRC L1 
& \heatmap{43} & \heatmap{50} & \heatmap{7} & \heatmap{50} & \heatmap{44} & \heatmap{6}
& \heatmap{48} & \heatmap{50} & \heatmap{2} & \heatmap{59} & \heatmap{28} & \heatmap{13}
& \heatmap{50} & \heatmap{50} & \heatmap{0} & \heatmap{47} & \heatmap{53} & \heatmap{0} \\

\MRC LF 
& \heatmap{58} & \heatmap{36} & \heatmap{6} & \heatmap{47} & \heatmap{50} & \heatmap{3}
& \heatmap{48} & \heatmap{52} & \heatmap{0} & \heatmap{61} & \heatmap{28} & \heatmap{11}
& \heatmap{50} & \heatmap{50} & \heatmap{0} & \heatmap{45} & \heatmap{55} & \heatmap{0} \\
\hline
\end{tabular}
}
\vspace{-10ex}
\end{table*}

\end{document}